\def\b1{\mathbf{1}}
\newcommand{\cG}{\mathcal{G}}
\newcommand{\E}{\mathbb{E}}
\newcommand{\cE}{\mathcal{E}}
\newcommand{\sM}{\mathscr{M}}
\newcommand{\N}{\mathbb{N}}
\newcommand{\Prob}{\mathbb{P}}
\newcommand{\Nbr}{\mathscr{N}}
\newcommand{\ind}[1]{\mathbbm{1}_{\{#1\}}}
\tikzset{>=latex}
\tikzstyle{vertex}=[circle,fill=black!25,minimum size=20pt,inner sep=0pt]
\tikzstyle{arc} = [draw,thick]
\tikzstyle{weight} = [font=\small]
\pgfplotsset{
  every axis plot/.append style={line width=1.5pt},
}
\begin{document}
\title{On a Class of Stochastic Multilayer Networks}


\author{Bo Jiang}
\affiliation{%
 \department{College of Information and Computer Sciences}
  \institution{University of Massachusetts}
  \city{Amherst} 
  \state{MA} 
  \postcode{01003}
  \country{USA}
}
\email{bjiang@cs.umass.edu}

\author{Philippe Nain}
\affiliation{%
  \institution{Inria} 
  \streetaddress{Ecole Normale Sup\'erieure de Lyon, LIP, 46 all\'ee d'Italie}
  \city{69364 Lyon} 
  \country{France} 
}
\email{philippe.nain@inria.fr}

\author{Don Towsley}
\affiliation{%
\department{College of Information and Computer Sciences}
  \institution{University of Massachusetts}
    \city{Amherst} 
  \state{MA} 
  \postcode{01003}
  \country{USA}
  }
\email{towsley@cs.umass.edu}

\author{Saikat Guha}
\affiliation{
\department{College of Optical Sciences}
  \institution{University of Arizona}
 \city{Tucson}
 \state{AZ}
 \postcode{85721}
 \country{USA}
}
\email{saikat@optics.arizona.edu}

\renewcommand{\shortauthors}{Bo Jiang et al.}

\begin{abstract}

In this paper, we introduce a new class of stochastic multilayer networks. 
A stochastic multilayer network is the aggregation of $M$ networks (one per layer) where each is a subgraph of a foundational network $G$. Each layer network is the result of probabilistically removing links and nodes from $G$. The resulting network includes 
any link that appears in at least $K$ layers. This model is an instance of a non-standard site-bond percolation model.
 Two sets of results are obtained: first, we derive the probability distribution that the $M$-layer network is in a given configuration for some particular graph structures (explicit results are provided for a line and an algorithm is provided for a tree), where a configuration is the collective state of all links (each either active or inactive). Next, we show that for appropriate scalings of the node and link selection processes
in a layer,  links are asymptotically independent as the number of layers goes to infinity, and follow Poisson distributions. Numerical results are provided to highlight the impact of having several layers on some metrics of interest (including expected size of the cluster a node belongs to in the case of the line).
This model finds applications in wireless communication networks with multichannel radios, multiple social networks with overlapping memberships, transportation networks, and, more generally, in any scenario where a common set of nodes can be linked via co-existing means of connectivity.

\end{abstract}

%
%
\begin{CCSXML}
<ccs2012>
<concept>
<concept_id>10002950.10003624.10003633.10003638</concept_id>
<concept_desc>Mathematics of computing~Random graphs</concept_desc>
<concept_significance>500</concept_significance>
</concept>
<concept>
<concept_id>10002950.10003624.10003633.10003640</concept_id>
<concept_desc>Mathematics of computing~Paths and connectivity problems</concept_desc>
<concept_significance>500</concept_significance>
</concept>
<concept>
<concept_id>10003033.10003083.10003090</concept_id>
<concept_desc>Networks~Network structure</concept_desc>
<concept_significance>500</concept_significance>
</concept>
</ccs2012>
\end{CCSXML}

\ccsdesc[500]{Mathematics of computing~Random graphs}
\ccsdesc[500]{Mathematics of computing~Paths and connectivity problems}
\ccsdesc[500]{Networks~Network structure}

\keywords{Stochastic multilayer network; percolation}


\maketitle


\section{Introduction}
\label{sec:intro}

There is an increasing need to understand how different networks interact with each other. One means of such interaction arises when users (nodes) belong to two or more networks (layers). In recent years, there has been a surge of interest in such {\em multilayer networks}~\cite{Kiv14, Boc14} due to their relevance in problems stemming in varied fields such as multifrequency wireless communication networks~\cite{Red11}, multiple online social networks serving a common population~\cite{Sze10, Men14, Bas15} just to name a few. Various models of multilayer networks (also termed multiplex networks and composite networks in the literature) relevant to different application scenarios have been proposed, in particular stochastic multilayer networks whose constructions can be described by one or more control parameters (such as probability of the presence of a node, edge or more complex attributes). For such networks, a wide variety of percolation formulations have been proposed and studied, e.g., competition between layers~\cite{Zha13},
weak percolation~\cite{Bax14}, $k$-core percolation~\cite{Azi14},
directed percolation~\cite{Azi14a}, spanning connectivity of a multilayer site-percolated network~\cite{Guh16}, and bond percolation~\cite{Hac16}. However, even simple multilayer network models have proven extremely difficult to analyze exactly~\cite{Guh16}. Consequently, most of this aforesaid recent literature on properties of multilayer networks consists of numerical and heuristic analyses.

Our goal in this paper is to consider a simple model for a stochastic multilayer network and to attempt exact characterization of the joint probability distribution of the collective (on-off) configuration of the links of the multilayer network. We provide exact results and efficient algorithms for some special graphs, and prove some complexity-theoretic hardness results in the general case. Our model is as follows. A multilayer network consists of $M$ co-existing networks $G^{(1)}$, $G^{(2)},\ldots, G^{(M)}$ connecting a common set of users. Each user is active in only a subset of these networks. We also say a user active in a particular network belongs to that network. A user active in both $G^{(1)}$ and $G^{(2)}$, for example, can help connect two other users that are active in $G^{(1)}$ alone, and in $G^{(2)}$ alone, respectively, by forming a bridge. \prettyref{fig:multilayer-network} illustrates an example with $M=3$ networks (layers), where a path connecting $v_1$ and $v_2$ must traverse all three layers, and one such path is shown to go through the bridge nodes $v_3$ and $v_4$, both of which belong to more than one layer.  A {\em stochastic multilayer network} is a graph $G=(V,E)$ along with a random process by which each network layer is obtained from $G$ by randomly removing links (called {\em link thinning}) and randomly deactivating nodes, and a process by which the $M$ thinned layers are merged into a single graph. Layer $m$, $1 \le m \le M$ is a subgraph of $G$ consisting of all remaining active nodes and all links between active nodes not removed through link thinning. There are different ways of creating a multilayer network out of the $M$ layers. One is simply to take the union of (the nodes and links of) all the layer graphs; this is illustrated by the three layer network in Figure~\ref{fig:multilayer-network}. We consider a slightly more general process whereby all active nodes are included in the final graph and all links that appear in at least $K$ layers. 

\begin{figure}[ht]
\begin{center}
\captionsetup{skip=3pt}
\includegraphics[width=0.5\columnwidth]{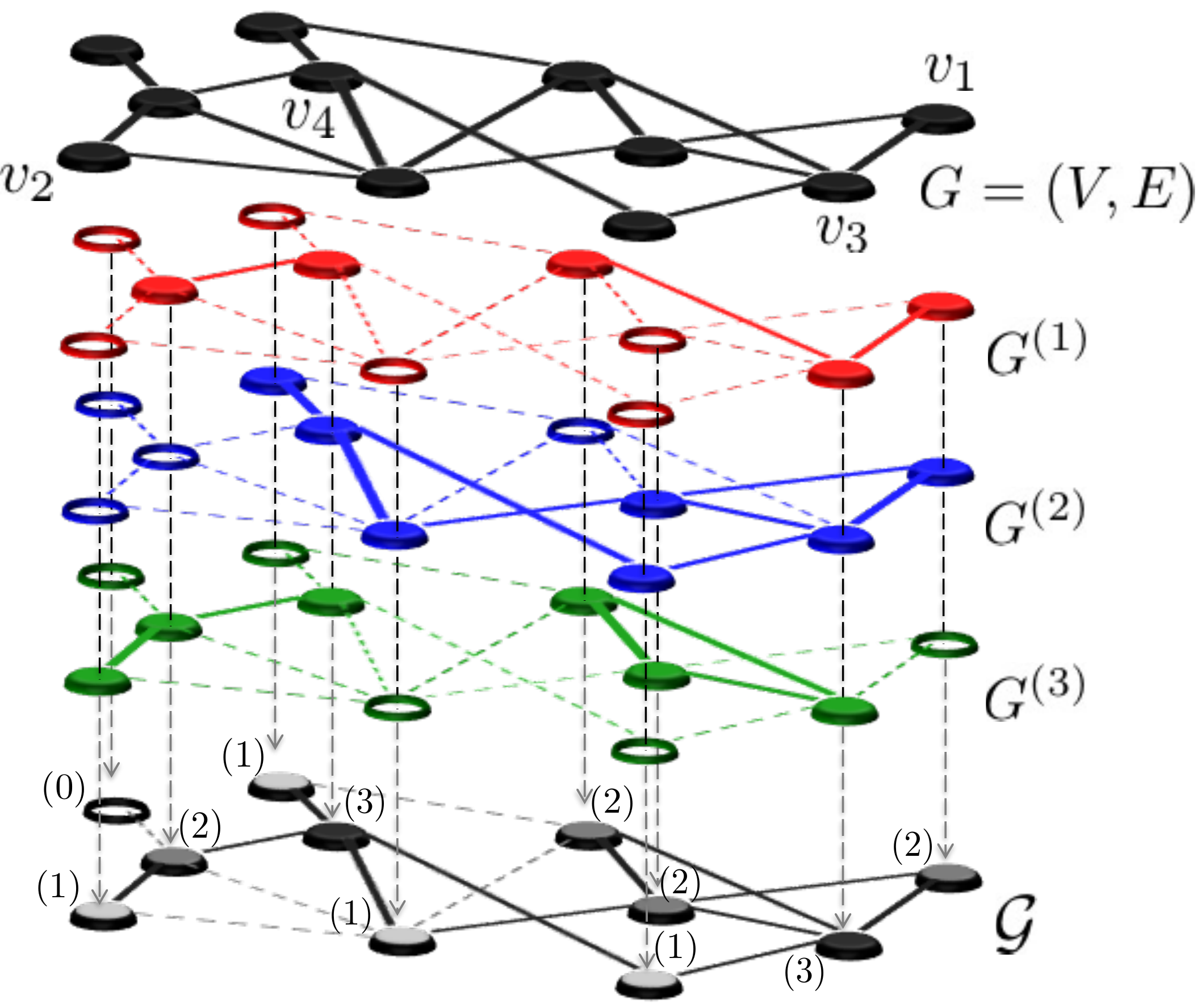}
\caption{Multilayer network with three layers.}
\label{fig:multilayer-network}
\end{center}
\end{figure}

Concrete examples of such multilayer networks are: (1) a network of cities connected via different airline companies where each city is served only by a subset of all the airlines~\cite{Bas15,Buc13}, (2) a network of users with accounts on multiple online social networks~\cite{Mur14}, and (3) a communication network of units equipped with radios that can listen and transmit simultaneously on a subset of multiple frequencies~\cite{Red11}. 

With a more liberal interpretation of ``co-existence'', such multilayer networks may also arise from taking snapshots of a single network at different time epochs. For example, consider a duty-cycled wireless sensor network where each sensor is active or dormant according to a random periodic schedule and each period is divided into $M$ slots. The $m$-th layer $G^{(m)}$ then consists of sensors that are active in the $m$-th slot. Duty-cycled models have been studied in the wireless sensor network literature (e.g.,~\cite{bagchi2015optimal} and references therein), where the underlying networks are usually random geometric graphs and the focus is on the connectivity of each layer, which is a much stronger notion of connectivity than  connectivity in our aggregate network.

Denote the configuration of a multilayer network by the collection of states of all links in the underlying graph $G$ after the layers have been merged.  Here the state of each link is either active (1) or inactive (0). We are interested in characterizing the configuration probability distribution of the multilayer network under the assumption that thinning and deactivation operations occur as  independent events. Such a characterization can be useful for computing quantities such as the distribution of the sizes of connected components and average path lengths. We show that in general, computing the network configuration distribution is hard - for example, in most cases computing the probability that there are no active links in the merged graph is \#P-hard. On the other hand we have partial positive results for some classes of graphs including trees. Moreover, we consider the behavior of this distribution in the limit as $M\rightarrow \infty$. Our contributions are:

\begin{itemize}
\item We present a new model of a stochastic multilayer network based on link thinning and node deactivation, and show that in general it is a difficult problem to compute probabilities of multilayer network configurations and it remains difficult even to approximate these probabilities.

\item We develop efficient algorithms for computing multilayer network configuration probabilities for line and  tree topologies.

\item We consider a setting where the number of layers $M$ goes to infinity and where link thinning probabilities and node deactivation probabilities are functions of $M$.  We provide conditions for link existence events to be asymptotically independent. 
\end{itemize}

The paper is organized as follows. Section \ref{sec:model} presents our stochastic multilayer model. The hardness of the problem of computing multilayer network configuration probabilities is addressed in Section \ref{sec:hardness}. Exact results and efficient computational algorithms are presented in Section \ref{sec:exact} and the asymptotic independence of the link states as $M\rightarrow \infty$ is found in Section \ref{sec:asymp}. A discussion of related work can be found in Section \ref{sec:related} and conclusions are drawn in Section \ref{sec:concl}.


\section{Model}\label{sec:model}

Let $G =(V,E)$ denote the underlying connectivity network, where $V$ is the set of nodes and $E\subset V\times V$ is the set of links that represent all \emph{possible} connections between pairs of nodes in $V$ (in the graph/percolation community a node is called a vertex/site and a link is called an edge/a bond; throughout we will use node and link
which are commonly used in communication networks). We assume network $G$ is connected.

Consider an $M$-layer network whose layers are sub-networks of $G$ obtained by randomly removing links (called \emph{link thinning}) and deactivating nodes. When a node is deactivated on a layer, all links incident on it are removed from the same layer, including those that have survived the independent \emph{link thinning} process. More precisely, the $M$-layer network is obtained from $G$ as follows. Let $\sM=\{1,2,\dots,M\}$ be the index set for layers.  Let $Y_{\sM,E} = \{Y_{m,\ell}:\ell\in E, m\in \sM\}$ and $Z_{\sM,V} = \{Z_{m,i}:i\in V, m\in \sM\}$ be two mutually independent sets of independent Bernoulli random variables.  For the $m$-th layer $G^{(m)}=(V, E^{(m)})$, node $i$ is active if and only if $Z_{m,i} = 1$, and the link set $E^{(m)}$ is given by $E^{(m)} = \{\ell\in E: W_{m,\ell}  = 1\}$, where $W_{m,\ell}=Y_{m,(i,j)} Z_{m,i} Z_{m,j}$ for link $\ell=(i,j)$. Note that link $\ell=(i,j)$ is in $E^{(m)}$ if and only if it is not thinned ($Y_{m,\ell}=1$) and both endpoints $i,j$ are active on the $m$-th layer ($Z_{m,i} = Z_{m,j} = 1$). We assume that the link thinning probabilities and the node activation probabilities are the same across different layers but may depend on individual links and nodes, i.e.  $p_\ell = \E[Y_{m,\ell}]$ and  $q_i = \E[Z_{m,i}]$ for all $m\in \sM$, $\ell\in E$ and $i\in V$. This assumption will be relaxed in \prettyref{subsec:asymptotic-non-identical}.  We also assume that all $p_\ell$'s and $q_i$'s are strictly positive.

Let $W_\ell \in \N$ denote the number of layers in which link $\ell$ is present. Formally, for $\ell = (i,j)$,
\begin{equation}\label{eq:W}
W_\ell = \sum_{m\in \sM} W_{m,\ell}.
\end{equation}
Note that we have suppressed the explicit dependence on $M$ of all random variables for notational simplicity. No confusion should arise. Given some threshold $K\in \N$, let $X_\ell = \ind{W_\ell \geq K}$, where $\mathbbm{1}_{A}$ is the indicator of event $A$.
We say that link $\ell$ is \emph{active} (\emph{inactive}) in the multilayer network if $X_\ell =1$ ($X_\ell = 0$). We obtain a merged network $\cG=(V,\cE)$, where $\cE = \{\ell\in E: X_\ell=1\}$ is the set of active links; we say the multilayer network has link configuration $\cE$, or equivalently, configuration $X_E\triangleq\{X_\ell:\ell\in E\}$. We will use the terms configuration and state interchangeably. The parameter $K$ determines the robustness of links in $\cG$; a larger value of $K$ results in more robust links but a possibly less well connected network $\cG$. Note that when $K=1$, $\cG$ is simply the union of the layers, i.e. $\cG = \bigcup_{m=1}^M G^{(m)}= (V,\bigcup_{m=1}^M E^{(m)})$. More generally,  we call any vector $x$ with component $x_\ell\in \{0,1\}$ for $\ell\in E$ a link configuration of the multilayer network, and we call it a feasible link configuration if $\Prob[X_\ell = x_\ell, \forall \ell \in E] > 0$. 

\prettyref{fig:multilayer-network} shows a three-layer network with $p_\ell=1$ for all $\ell$ (i.e. no link is thinned) and $K=1$.  Inactive links on each layer and in the merged network at the bottom are represented by dashed lines. The bottom graph is $\cG=\bigcup_{m=1}^3 G^{(m)}$. 
In this network, node $v_1$ belongs to two layers, $v_2$ belongs to three layers and these nodes have two layers in common,  and $v_3$ belongs to one layer.


\section{Hardness Results}\label{sec:hardness}

In this section, we show that it is very hard to compute the probability of given link configurations in arbitrary multilayer networks. We show in  \prettyref{subsec:hardness-single} that one source of hardness is the generality of the underlying connectivity network $G$. On the other hand, we show in \prettyref{subsec:hardness-multi} that hardness may arise from the multilayer nature of the problem, even when the underlying network $G$ has a simple structure such as a clique, which makes the single layer problem easy. We assume $K=1$ throughout this section.

\subsection{Hardness for Single Layer General Graphs}\label{subsec:hardness-single}

In this section, we show that it is hard to compute link configuration probabilities for general underlying connectivity network $G$ even when there is only one layer , i.e. $M=1$. The proof uses a reduction from the \textsc{\#Independent Set} problem. Recall that an independent set is a set of vertices in a graph no pair of which are adjacent.

\begin{definition}[\textsc{\#Independent Set}] 
Given a graph $G$, count the number of independent sets in $G$.
\end{definition}

Corollary 4.2 of \cite{vadhan2001complexity} shows that many special cases of \textsc{\#Independent Set} is \#P-complete, and hence the following

\begin{lemma}[\cite{vadhan2001complexity}]\label{lemma:sharp_P}
\#Independent Set is \#P-hard.
\end{lemma}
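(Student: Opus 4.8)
The statement is imported verbatim from the literature, so the most direct route is simply to invoke Corollary 4.2 of \cite{vadhan2001complexity}, which establishes \#P-completeness (and a fortiori \#P-hardness) of \textsc{\#Independent Set}, even when the input graph is restricted to sparse, regular, or planar instances. Since nothing in this paper needs a new proof of the hardness itself, I would present the lemma as a one-line consequence of that cited corollary. Below I sketch how one would argue the base claim from scratch, should a self-contained justification be desired.

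First I would dispose of the easy observation that the count is a \#P function: a subset $S\subseteq V$ serves as a certificate, verifying that $S$ is independent (no edge of $E$ has both endpoints in $S$) takes polynomial time, and the number of accepting certificates equals exactly the number of independent sets. This places the problem in \#P; the whole content of the lemma is therefore the hardness direction.

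For \#P-hardness I would exhibit a parsimonious (count-preserving) equivalence with a counting problem already known to be \#P-complete, the natural candidate being \textsc{\#Monotone-2-SAT}. Given $G=(V,E)$, associate a Boolean variable $x_v$ with each vertex and form the 2-CNF $\varphi_G = \bigwedge_{(u,v)\in E}(\overline{x}_u \vee \overline{x}_v)$. An assignment satisfies $\varphi_G$ precisely when the set $\{v : x_v = 1\}$ contains no edge, i.e.\ when it is independent, so satisfying assignments of $\varphi_G$ and independent sets of $G$ are in bijection and the two counts coincide. Complementing every variable rewrites each clause $\overline{x}_u \vee \overline{x}_v$ as a purely positive clause, turning $\varphi_G$ into a monotone 2-CNF without changing the number of satisfying assignments; this realizes \textsc{\#Independent Set} as \textsc{\#Monotone-2-SAT}, which is \#P-complete by Valiant's results. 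Hence \textsc{\#Independent Set} is \#P-hard.

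The only genuine obstacle is the foundational \#P-hardness of the reduction source: \textsc{\#Monotone-2-SAT} is hard despite its decision version being trivially satisfiable, and establishing that rests on Valiant's permanent/interpolation machinery rather than on any elementary argument. I would therefore import that fact as a black box rather than reprove it, which is exactly what a single appeal to \cite{vadhan2001complexity} accomplishes.
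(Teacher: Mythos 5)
The paper offers no proof of this lemma at all: it is imported from the literature, justified by the single sentence that Corollary~4.2 of \cite{vadhan2001complexity} establishes \#P-completeness of (special cases of) \textsc{\#Independent Set}. Your primary route --- invoking that same corollary as a black box --- is therefore exactly the paper's approach, and your remark that membership in \#P is trivial is correct. One caveat on your supplementary self-contained sketch: as written, your map goes from a graph $G$ to a monotone 2-CNF, which is the reduction \textsc{\#Independent Set} $\le$ \textsc{\#Monotone-2-SAT}; that direction alone says nothing about the hardness of \textsc{\#Independent Set}. What hardness requires is the reverse map --- given a monotone 2-CNF, build the graph whose vertices are the variables and whose edges are the clauses, so that satisfying assignments biject with independent sets via the set of \emph{false} variables --- and this does hold here because your correspondence is a bijection on instances, but the inference should be stated in that direction (your phrase ``realizes \textsc{\#Independent Set} as \textsc{\#Monotone-2-SAT}'' runs the implication backwards).
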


Recall that \#P is the class of counting problems that correspond to the decision problems in the class NP; while a decision problem asks whether there exists a solution, the corresponding counting problem asks how many solutions there are. Note that \#P-complete problems are NP-hard.
The following lemma of \cite{roth1996hardness} shows that \textsc{\#Independent set} is hard even to approximate.

\begin{lemma}[Lemma A.3 of \cite{roth1996hardness}]\label{lemma:inapproximable}
For any $\epsilon>0$, approximating the number of independent sets of a graph on $n$ vertices within $2^{n^{1-\epsilon}}$ is NP-hard.
\end{lemma}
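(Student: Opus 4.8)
The plan is to establish NP-hardness by a gap-amplifying reduction from the problem of computing the independence number $\alpha(G)$ of a graph, which is NP-hard because the associated decision problem (``does $G$ have an independent set of size at least $t$?'') is one of the classical NP-complete problems. The guiding idea is that a sufficiently accurate \emph{multiplicative} approximation to the \emph{number} of independent sets of a cleverly inflated graph reveals, through its order of magnitude, the \emph{size} of the largest independent set of the original graph.

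The reduction uses a blow-up construction. Given $G=(V,E)$ on $n$ vertices and a parameter $s$, I would let $G[s]$ be the graph obtained by replacing each vertex $v$ with an independent set $V_v$ of $s$ copies, and joining $V_u$ to $V_v$ by a complete bipartite graph whenever $(u,v)\in E$. An independent set of $G[s]$ is then exactly a choice, for each $v$, of a subset $S_v\subseteq V_v$ whose support $\{v:S_v\neq\emptyset\}$ is independent in $G$; summing the $(2^s-1)^{|I|}$ admissible choices over each independent set $I$ of $G$ gives, writing $N(\cdot)$ for the number of independent sets and $a_k$ for the number of independent sets of size $k$,
\begin{equation}
N(G[s]) \;=\; \sum_{I}\,(2^s-1)^{|I|} \;=\; \sum_{k\ge 0} a_k\,(2^s-1)^k ,
\end{equation}
which is the independence polynomial of $G$ evaluated at $2^s-1$.

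The key estimate is that the top-degree term dominates. Since $a_\alpha\ge 1$ and $\sum_k a_k = N(G)\le 2^n$, where $\alpha=\alpha(G)$, one has
\begin{equation}
(2^s-1)^{\alpha} \;\le\; N(G[s]) \;\le\; 2^{n}\,(2^s-1)^{\alpha},
\end{equation}
so that $\bigl|\log_2 N(G[s]) - \alpha s\bigr| < n$. Hence the independence number can be read off as $\alpha=\mathrm{round}\bigl(\log_2 N(G[s])/s\bigr)$, \emph{provided} the accuracy available in $\log_2 N(G[s])$ is below $s/2$.

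Finally I would tune $s$. The graph $G[s]$ has $N=ns$ vertices, so a hypothetical polynomial-time algorithm approximating the number of independent sets within the factor $2^{N^{1-\epsilon}}$ returns $\hat N$ with $\bigl|\log_2\hat N-\log_2 N(G[s])\bigr|\le (ns)^{1-\epsilon}$. Choosing $s$ to be a fixed polynomial in $n$ (depending on $\epsilon$) large enough that $s>4n$ and $s^{\epsilon}>4n^{1-\epsilon}$ forces both the $O(n)$ term above and the error $(ns)^{1-\epsilon}=n^{1-\epsilon}s^{1-\epsilon}$ below $s/4$, so their sum is below $s/2$ and rounding recovers $\alpha(G)$ exactly. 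As $G[s]$ is constructible in time polynomial in $n$, this would compute $\alpha(G)$ in polynomial time given the approximation oracle, contradicting the NP-hardness of the independence number. I expect the balancing act in this last step to be the main obstacle: the instance must be inflated enough that the leading term of the independence polynomial separates consecutive values of $\alpha$ by strictly more than the permitted approximation error, while $s$ stays polynomial so the reduction remains efficient. The subexponential slack $2^{n^{1-\epsilon}}$ (rather than a constant or polynomial factor) is precisely what the blow-up can absorb, which is why the statement is phrased for that gap.
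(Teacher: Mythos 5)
The paper does not actually prove this statement: it is imported verbatim as Lemma A.3 of the cited work of Roth and used as a black box (its only role is to transfer inapproximability to the configuration probability $\Prob[A]$ in \prettyref{subsec:hardness-single} via the correspondence between independent sets and zero-link configurations). Your proposal, by contrast, is a genuine self-contained proof, and it is correct. The blow-up identity $N(G[s])=\sum_k a_k(2^s-1)^k$ holds because an independent set of $G[s]$ is precisely a choice of nonempty subsets $S_v\subseteq V_v$ over an independent support in $G$; the sandwich $(2^s-1)^{\alpha}\le N(G[s])\le 2^n(2^s-1)^{\alpha}$ follows from $a_\alpha\ge 1$, $a_k=0$ for $k>\alpha$, and $\sum_k a_k\le 2^n$; and your choice of $s$ (polynomially bounded for each fixed $\epsilon$, since $s\approx(4n)^{1/\epsilon}$ suffices) makes the combined error $n+(ns)^{1-\epsilon}<s/2$, so rounding $\log_2\hat N/s$ recovers $\alpha(G)$ exactly. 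Since computing the independence number is NP-hard via the classical decision problem, this Cook reduction establishes the lemma. Two minor points: your argument as written covers $\epsilon\in(0,1)$, and for $\epsilon\ge 1$ the claim follows a fortiori because the permitted factor $2^{n^{1-\epsilon}}$ only shrinks, so any algorithm achieving it also achieves the larger factor already shown hard; and the conclusion is best phrased not as a ``contradiction'' but as the definition of NP-hardness under Turing reductions (a polynomial-time approximator would yield a polynomial-time algorithm for an NP-hard function). Substantively, this gap-amplification-by-blow-up is the standard argument behind Roth's lemma, so what you have done is reconstruct the external proof the paper chose to cite rather than reproduce --- a useful addition, since it makes the paper's hardness section self-contained, but not a different route to the result.
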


Now we show that it is hard to compute the probability of the  configuration where no link is active, even when there is no link thinning and the probability for a node to be active is $1/2$ for all nodes. It follows that the general case is also hard.

\begin{proposition}
Suppose $M=1$, $p_\ell=1$ for all $\ell\in E$, and $q_i = 1/2$ for all $i\in V$. 
It is \#P-hard to compute the probability that the network is in the configuration with no active link. It is NP-hard to approximate this probability within a multiplicative factor of $2^{n^{1-\epsilon}}$ for any $\epsilon>0$, where $n=|V|$ is the number of nodes.
\end{proposition}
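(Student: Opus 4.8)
The plan is to give a polynomial-time reduction from \textsc{\#Independent Set}, which is \#P-hard by \prettyref{lemma:sharp_P}. First I would unwind the model in this special case. With $M=1$ and $p_\ell=1$, no link is thinned, so the only randomness comes from the node activations $\{Z_{1,i}:i\in V\}$, which are independent and equal to $1$ with probability $1/2$. For a link $\ell=(i,j)$ we then have $X_\ell = W_{1,\ell} = Z_{1,i}Z_{1,j}$, so $\ell$ is active if and only if both of its endpoints are active.

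Next I would translate the target event into combinatorial language. Let $S=\{i\in V:Z_{1,i}=1\}$ denote the (random) set of active nodes. The network is in the configuration with no active link precisely when no edge of $G$ has both endpoints in $S$, i.e. when $S$ is an independent set of $G$. Since each node is active independently with probability $1/2$, every particular subset $S\subseteq V$ is realized with probability $(1/2)^{|S|}(1/2)^{n-|S|}=2^{-n}$. Summing over the independent sets therefore gives the exact identity
\begin{equation*}
\Prob[\text{no active link}] = 2^{-n}\cdot\#\{\text{independent sets of }G\}.
\end{equation*}

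This identity immediately yields both claims. Any algorithm computing $\Prob[\text{no active link}]$ in polynomial time would, after multiplication by $2^n$, count the independent sets of $G$ in polynomial time, contradicting \prettyref{lemma:sharp_P}; hence the probability is \#P-hard to compute. For inapproximability, note that the multiplicative constant $2^{-n}$ is exact and graph-independent, so a multiplicative $2^{n^{1-\epsilon}}$-approximation of the probability is, after scaling by $2^n$, a multiplicative $2^{n^{1-\epsilon}}$-approximation of the number of independent sets, which is NP-hard by \prettyref{lemma:inapproximable}. I do not anticipate a genuine obstacle here: the reduction is essentially immediate once the ``no active link'' event is identified with independent sets. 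The only points requiring care are, first, that the uniform activation probability $1/2$ makes every node-subset equiprobable, which is exactly what turns the probability into a fixed scalar multiple of the count and lets the approximation factor pass through unchanged; and second, to respect the standing assumption that $G$ is connected, one may if necessary adjoin a single universal vertex to the \textsc{\#Independent Set} instance, which renders the graph connected while changing the independent-set count only by an additive one and thus preserving hardness.
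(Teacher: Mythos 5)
Your proof is correct and follows essentially the same route as the paper: identifying the no-active-link event with independent sets of $G$, using $q=1/2$ to make all $2^n$ node configurations equiprobable so that the probability equals $2^{-n}$ times the independent-set count, and transferring both the \#P-hardness and the inapproximability through this exact rescaling. Your additional remark about restoring connectivity via a universal vertex is a nice touch the paper omits, but it does not change the substance of the argument.
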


\begin{proof}
Given a graph $G=(V,E)$ and any node configuration $a:V\to \{0,1\}$, let $s(a) = a^{-1}(1)$ be the set of {active} nodes. Let $A$ denote the set of node configurations that results in an empty link set $\cE$. Note that $a\in A$ if and only if $s(a)$ is an independent set of $G$. Since $q=1/2$, all node configurations are equally likely and there are $2^n$ of them, so $|A| = 2^{n}\Prob[A]$.
Thus counting the number of independent sets in $G$ is equivalent to computing $\Prob[A]$, as one can be easily obtained from the other through rescaling. Since it is  \#P-hard to compute $|A|$ by Lemma \ref{lemma:sharp_P}, it is \#P-hard to compute $\Prob[A]$. By Lemma \ref{lemma:inapproximable}, it is NP-hard to approximate $\Prob[A]$ within a multiplicative factor of $2^{n^{1-\epsilon}}$. 
\end{proof}

\subsection{Hardness for Multilayer Cliques}\label{subsec:hardness-multi}

In this section, we show that hardness arises in yet another dimension. Consider the case that the underlying network $G$ is a clique. In this case, it is trivial to compute link configuration probabilities for a single layer\footnote{For a single layer, there is no active link if and only if at most one node is active; for configurations with at least one active link, a node is active if and only if it is the end point of an active link.}, but for a large number of layers, the problem becomes hard. In fact, it is hard even to test the feasibility of a configuration, which is a simpler problem, since a configuration is feasible if and only if its probability is nonzero. Consider the \textsc{Multilayer Clique Configuration} (MCC) problem defined below. 

\begin{definition}[\textsc{Multilayer Clique Configuration}]
Given an $M$-layer network with $G$ being a clique and a link configuration $x$, decide whether $x$ is feasible. Denote an instance by $(x,M)$.
\end{definition} 

Given any link configuration $x$, let $G(x)$ be the subgraph induced by the {active} links in $x$, i.e. $G(x) = (V, E(x))$ where $E(x) = \{\ell\in E: x_\ell = 1\}$. We have the following feasibility test.

\begin{lemma}\label{lemma:clique}
Suppose the underlying network $G$ is a clique. A link configuration $x$  is feasible if and only if the induced subgraph $G(x)$ is covered by at most $M$ cliques. 
\end{lemma}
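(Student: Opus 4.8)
The plan is to translate feasibility into a combinatorial condition on the per-layer active-node sets and then recognize that condition as an edge clique cover of $G(x)$. I work in the regime of this section, $K=1$ with no link thinning ($p_\ell=1$). Since $G$ is a clique and nothing is thinned, the links present in layer $m$ are exactly the pairs of nodes active in that layer, so layer $m$ is completely described by $S_m:=\{i\in V:Z_{m,i}=1\}$ and contributes precisely the clique on $S_m$. As $K=1$, a link is active in $\cG$ iff it is present in some layer, hence $X_{(i,j)}=1$ exactly when some $S_m$ contains both $i$ and $j$. Because every $q_i\in(0,1)$ and $Y_{m,\ell}\equiv 1$, each family $(S_m)_{m\in\sM}$ occurs with positive probability and every positive-probability outcome is of this form; so I would first reduce the claim to: $x$ is feasible iff there exist $S_1,\dots,S_M\subseteq V$ with $x_{(i,j)}=1\iff\exists m,\ \{i,j\}\subseteq S_m$.

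For the \emph{only if} direction I would take a realizing family $(S_m)$ and argue that each $S_m$ is a clique of $G(x)$: if $i,j\in S_m$ then the pair is present in layer $m$, so $x_{(i,j)}=1$ and $\{i,j\}\in E(x)$; thus every pair inside $S_m$ is an edge of $G(x)$. Since, by the equivalence above, every active edge lies in some $S_m$, the cliques $S_1,\dots,S_M$ cover all edges of $G(x)$, giving an edge clique cover with at most $M$ cliques. For the \emph{if} direction I would start from cliques $C_1,\dots,C_r$ with $r\le M$ covering the edges of $G(x)$, set $S_m=C_m$ for $m\le r$ and $S_m=\emptyset$ otherwise. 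Each active edge sits in some $C_m$, hence is present in layer $m$ and active in $\cG$; each inactive edge $(i,j)$ can have both endpoints in no $C_m$ (a clique of $G(x)$ contains only active pairs), so it is present in no layer and stays inactive. Thus $(S_m)$ realizes $x$, which is therefore feasible.

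The crux, and the only real content, is the observation underlying the reduction: in a thinning-free clique, activating a set of nodes forces all pairwise links among them to appear, so the inactive links impose exactly the constraint that two nodes joined by an inactive link may never be simultaneously active in a common layer --- which is precisely the statement that each layer's active set is a clique of $G(x)$. Converting this forbidden-pair constraint into the clique-cover formulation, and checking that empty padding layers are harmless, is the heart of the argument; the probabilistic bookkeeping is routine given $0<q_i<1$. I would also flag why the absence of thinning is essential here: if thinning were allowed, a single layer activating all nodes and thinning exactly the inactive links would reproduce any $G(x)$ with positive probability, making every configuration feasible and the lemma vacuous, so this feasibility test lives genuinely in the node-activation-only setting of this section.
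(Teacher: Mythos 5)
Your proof is correct and takes essentially the same route as the paper's: the forward direction identifies each layer's set of active nodes as a clique of $G(x)$ whose union covers every active edge, and the reverse direction activates on layer $m$ exactly the nodes of the covering clique $C_m$ (with empty padding layers), which is precisely the paper's argument. Your explicit flagging of the assumptions needed for the reduction ($p_\ell=1$ and $q_i\in(0,1)$, so that every family of active-node sets has positive probability and layers are cliques) is left implicit in the paper, but the substance of the two arguments is identical.
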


\begin{proof}
If $x$ is feasible, then $x = \bigvee_{m=1}^M x^{(m)}$, where $x^{(m)}$ is a feasible link configuration of the $m$-th layer, and $\vee$ is component-wise  maximum. Since $G$ is a clique, so is the subgraph $G(x^{(m)})$ induced by $x^{(m)}$, if it is not empty. Thus $G(x) = \bigcup_{m=1}^M G(x^{(m)})$ is covered by at most $M$ cliques. 

For the reverse direction, suppose $G(x)$ can be covered by $M'\leq M$ cliques $C_1, \dots, C_{M'}$. On the $m$-th layer $G^{(m)}$, set a node to be {active} if and only if it is in $C_m$, which is a node configuration with positive probability. The resulting link configuration of the $M$-layer clique is exactly $x$, so $x$ is feasible.
\end{proof}

The above proof shows that any instance of MCC with $M=1$ is easy; the configuration $x$ is feasible if and only if the graph $G(x)$ induced by the {active} links in $x$ is itself a clique. For $M\geq |E|=n(n-1)/2$, where $n=|V|$ is the number of nodes, $x$ is always feasible, since $G(x)$ can be covered by $|E|$ links, which are cliques of size 2.  For the general case, however, we now show it is NP-complete by reduction from the \textsc{Clique Edge Cover} (CEC) problem, which is known to be NP-complete. Recall

\begin{definition}[\textsc{Clique Edge Cover}]
Given a graph $G$ and an integer $k$, decide whether all edges of $G$ can be covered by at most $k$ cliques in $G$. Denote an instance by $(G,k)$.
\end{definition}

\begin{lemma}[Theorem 8.1 of \cite{orlin1977contentment}]\label{lemma:CEC}
CEC is NP-complete.
\end{lemma}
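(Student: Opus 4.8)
Since this lemma records a classical NP-completeness result that the paper invokes by citation, a proof proposal amounts to recalling the standard template for such statements: first establish membership in NP, and then exhibit a polynomial-time many-one reduction from a problem already known to be NP-complete.

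Membership in NP is the easy half. A certificate for a yes-instance $(G,k)$ is a list of at most $k$ vertex subsets $C_1,\dots,C_{k'}$ with $k'\le k$. Given such a list, a verifier checks in polynomial time that (i) each $C_t$ induces a complete subgraph of $G$, by inspecting all pairs of its vertices, and (ii) every edge $(i,j)\in E$ has both endpoints inside some common $C_t$, i.e. is covered. Both checks run in time polynomial in $|V|$ and $k$, so CEC $\in$ NP.

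For NP-hardness I would reduce from a canonical NP-complete problem such as \textsc{3-SAT} or graph colorability. The guiding intuition is that a clique in $G$ is a single ``unit'' able to absorb many edges at once, so demanding a cover by few cliques forces the large cliques of $G$ to be used coherently; one then engineers gadgets in which the only economical covers correspond to consistent assignments (a satisfying assignment, or a proper coloring) of the source instance. Concretely, from a source instance $I$ one builds, in polynomial time, a graph $G_I$ together with a budget $k_I$ such that $G_I$ admits an edge clique cover of size at most $k_I$ if and only if $I$ is a yes-instance. The forward direction turns a solution of $I$ into an explicit list of at most $k_I$ cliques, while the reverse direction reads a consistent solution of $I$ off any cover meeting the budget.

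The main obstacle is establishing correctness of the gadget in both directions at once. The completeness side (a solution of $I$ yields a small cover) is usually routine once the gadget is fixed, but the soundness side is delicate: one must rule out ``clever'' covers that exploit overlapping cliques, or that reuse cliques across gadgets, to beat the budget $k_I$ without corresponding to any solution of $I$. Pinning down the exact threshold $k_I$ so that it cleanly separates yes- and no-instances, rather than merely bounding the optimum loosely, is where the real work lies, and is precisely what the argument behind Theorem~8.1 of \cite{orlin1977contentment} carries out.
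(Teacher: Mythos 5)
The paper offers no proof of this lemma at all: it is imported as a black-box citation to Theorem~8.1 of \cite{orlin1977contentment}, so any actual proof must reproduce the content of that theorem. Your proposal does not do this. The NP-membership half is correct and complete: a list of at most $k$ vertex subsets is a polynomial-size certificate, and verifying that each listed subset induces a clique and that every edge of $G$ lies inside some listed subset takes polynomial time. But the NP-hardness half, which is the entire substance of the lemma, is never established. You name plausible source problems (\textsc{3-SAT}, colorability --- and indeed the classical proofs do reduce from coloring-type problems), and you describe in generic terms what a gadget construction would have to accomplish, but you never define the map $I \mapsto (G_I, k_I)$, never exhibit a single gadget, and never argue either direction of the equivalence. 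The closing remark that pinning down $k_I$ ``is precisely what the argument behind Theorem~8.1 of \cite{orlin1977contentment} carries out'' concedes the point: the one nontrivial step is delegated to the very theorem being proved, which is circular as a proof (though perfectly fine as a citation, which is all the paper itself does).

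So the verdict depends on what the text is claiming to be. Read as a justification for invoking the lemma by reference --- matching the paper's own treatment --- it is acceptable, and the NP-membership verification is even a small bonus the paper does not spell out. Read as a proof of the lemma, it has a genuine gap: the reduction, its budget $k_I$, and the two correctness directions (especially soundness, ruling out covers that cheat by overlapping or reusing cliques across gadgets) are exactly the missing content, and no amount of template description substitutes for constructing them.
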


We have the following

\begin{proposition}
MCC is NP-complete.
\end{proposition}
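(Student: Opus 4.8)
The plan is to establish the two halves of NP-completeness separately, leaning on \prettyref{lemma:clique} to translate feasibility into a purely combinatorial covering condition and on \prettyref{lemma:CEC} for the hardness.

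First I would show membership in NP. By \prettyref{lemma:clique}, $x$ is feasible precisely when the induced subgraph $G(x)$ admits a clique edge cover using at most $M$ cliques, so a natural certificate is such a cover, presented as a list of vertex subsets $C_1,\dots,C_{M'}$ with $M'\le M$. The key observation guaranteeing a short certificate is that one never needs more than $|E(x)|\le n(n-1)/2$ cliques, since the individual edges of $G(x)$ (cliques of size two) already cover it; hence whenever a valid cover exists, one exists with $M'\le \min(M,|E(x)|)$, and the certificate has size $O(n^3)$, polynomial in the input regardless of how $M$ is encoded. A verifier then checks in polynomial time that each $C_m$ is a clique in $G(x)$ (all its pairs lie in $E(x)$) and that every edge of $G(x)$ lies in some $C_m$. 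Thus MCC is in NP.

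Next I would prove NP-hardness by a reduction from CEC, which is NP-complete by \prettyref{lemma:CEC}. Given a CEC instance $(H,k)$ with $H=(V_H,E_H)$, I construct the MCC instance $(x,M)$ by taking the underlying network $G$ to be the complete graph on the vertex set $V_H$, setting $x_\ell = \ind{\ell \in E_H}$ for every link $\ell$ of $G$, and setting $M=k$. This construction is clearly polynomial. By design the subgraph induced by the active links is exactly $G(x)=H$, so \prettyref{lemma:clique} says $x$ is feasible if and only if $E_H$ can be covered by at most $k$ cliques, which is precisely the CEC question. Hence $(H,k)$ is a yes-instance of CEC if and only if $(x,k)$ is a yes-instance of MCC, completing the reduction.

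The one point that deserves care, and which I expect to be the main (modest) obstacle, is the matching of the two notions of ``clique'' across the reduction. The covering cliques supplied by \prettyref{lemma:clique} are subgraphs of $G(x)$, i.e.\ vertex sets all of whose pairs are active links of $x$, whereas CEC asks for cliques of $H$. Because the construction forces $G(x)=H$, these two families coincide exactly, and an edge of $G(x)$ is covered if and only if the corresponding edge of $H$ is; isolated vertices of $H$ carry no edges and so impose no constraint in either problem. Once this correspondence is pinned down the equivalence of instances is immediate, so the remaining work is just bookkeeping around the clique notion and the polynomial-size certificate rather than any substantive difficulty.
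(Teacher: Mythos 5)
Your proof is correct and follows essentially the same route as the paper's: membership in NP plus a reduction from CEC that takes the complete graph on the CEC instance's vertex set, sets $x_\ell=1$ exactly for the edges of that instance, and sets $M=k$, then invokes Lemma~\ref{lemma:clique} and Lemma~\ref{lemma:CEC}. The only difference is that you spell out the NP-membership certificate (including the observation that at most $\min(M,|E(x)|)$ cliques are ever needed, so the certificate stays polynomial even if $M$ is encoded in binary), which the paper dismisses with ``MCC is clearly in NP.''
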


\begin{proof}
MCC is clearly in NP. We show that it is NP-hard by reduction from CEC.
Fix an instance $(G,k)$ of CEC. Consider the clique $C$ that has the same node set as $G$. Let $x$ be the link configuration of $C$ such that $x_\ell=1$ if and only if $\ell$ is a link in $G$, i.e. $G$ is the subgraph of $C$ induced by $x$. Now we obtain an instance $(x, k)$ of MCC. The conclusion then follows from \prettyref{lemma:clique} and  \prettyref{lemma:CEC}.
\end{proof}


\section{Exact Results}\label{sec:exact}

In this section, we provide recursions for computing link configuration probabilities. In the case of trees and lines, the recursions can be turned into a pseudopolynomial algorithm. In \prettyref{subsec:diff-recursions}, we discuss two different ways of doing recursions. We then consider line and tree networks in Sections \ref{ssec:line} and \ref{ssec:tree}, respectively.

\subsection{Two Different Ways for Recursion}\label{subsec:diff-recursions}

There are two natural ways to obtain recursions for link configuration probabilities of a multilayer network. One is to do recursion on the number of layers and the other on the number of nodes. We briefly discuss the former in the present section and leave the latter for Sections \prettyref{ssec:line} and \prettyref{ssec:tree}. We restrict our discussion to the case $K=1$ in this section.

Consider an $M$-layer network with a general underlying graph $G=(V,E)$. For $K=1$, the merged network is $\cG = \bigcup_{m=1}^M G^{(m)}$. Now considered a network $\cG^{(k)}$ obtained by merging only the first $k$ layers, i.e. ${\cG}^{(k)}: =\bigcup_{m=1}^k G^{(m)}$, for $k=1,2,\dots, M$.  Recall that a link is active in the $\cG^{(k)}$ if it is active in at least one layer $1,\ldots,k$.
Let  $Q_k(x)$ be the probability of the link configuration  $x\in \{0,1\}^{|E|}$ in $\cG^{(k)}$.
We have the following recursion,
\begin{equation}
\label{recur-layer}
Q_{k+1}(x)= \sum_{ y\in Y(x)}  Q_k(y) Q_1(x-y)
\end{equation}
for all $x\in \{0,1\}^{|E|}$ and $k=1\ldots, M$,
where $Y(x):=\{ y\in \{0,1\}^{|E|} : y \leq  x \}$  is the set of vectors in $\{0,1\}^{|E|}$ component-wise smaller than or equal to vector $x\in  \{0,1\}^{|E|}$.  

For instance, if $|E|=3$ and $x=(0,1,1)$ then 
\[
Y(x)=\{(0,1,1), (0,0,1), (0,1,0), (0,0,0)\}.
\]
Note that for any vectors $x, y \in \{0,1\}^{|E|}$ such that $y\leq x$, the vector $x-y$ is  also a vector in $\{0,1\}^{|E|}$.

Recursion  \eqref{recur-layer}  shows that if one know $Q_1(x)$ for all $x\in \{0,1\}^{|E|}$ then one can determine the probability configuration of any $m$-layer graph. However, as we have seen in \prettyref{subsec:hardness-single}, it is not easy to compute even $Q_1(x)$ for general $G$. Moreover, the number of terms in the summation in \eqref{recur-layer} is exponential in the graph size. Thus \eqref{recur-layer} is feasible only for very small graphs. 

Note that \eqref{recur-layer} still requires exponential time even when the underlying graph has a simpler structure such as a tree. As we will see in the next two sections, recursions on the number of nodes lead to computationally more efficient algorithms for tree networks, although we do not know how to do it for general graphs.

\label{ssec:recursion-layers}

\subsection{Links in Series}
\label{ssec:line}
Throughout this section, we assume that $q_v=q$ for all nodes $v$, $p_\ell=1$ for all  links $\ell$, and $K=1$. Section \ref{ssec:tree} presents a general algorithm for trees that allows for arbitrary $q_v$, $p_\ell$ and $K$.

Consider the graph $G_n=(V_n,E_n)$ defined by $V_n=\{1,2,\dots,n+1\}$ and $E_n=\{ e_j\}_{j=1}^{n}$, where $e_j=(j,{j+1})$ for $j=1,\ldots,n$. 
In other words, $G_n$ is composed of $n+1$ nodes and $n$ links in series, $e_1,e_2,\ldots,e_{n}$; see  \prettyref{fig:line-network}.

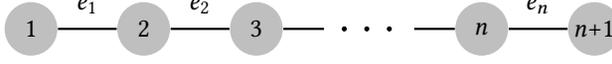
\begin{figure}[h]
\begin{center}
\captionsetup{skip=5pt}
\begin{tikzpicture}[scale=1.5, auto,swap]
    \pgfmathsetmacro{\r}{7.9}
    \foreach \x in {1,2,3}
        \node[vertex] (\x) at (\x,0) {$\x$};    
    \node[vertex] (4) at (5,0) {$n$}; 
    \node[vertex] (5) at (6,0) {$n$+$1$}; 
            
    \foreach \x in {0,0.2,0.4}
	\draw[fill=black] (3.78+\x,0) circle (0.5pt);

    \coordinate (a) at (3.6,0);
    \coordinate (b) at (4.4,0); 
	
	\graph[edges=arc]{
	(1)--(2)--(3)--(a); (b)--(4)--(5);
	};
   \foreach \x in {1,2}
        \node  at (\x+.5,.2) {$e_\x$};    
  \node  at (5.5,.2) {$e_{n}$};  
\end{tikzpicture}
\caption{Line network $G_n$.}
\label{fig:line-network}
\end{center}
\end{figure}

We are interested in calculating  $Q_{G_n, x^{(n)}}$, the probability that links $e_1,\ldots,e_{n}$ are in state $x^{(n)}:=(x_1,\ldots,x_{n})\in \{0,1\}^n$.
Link $e_j$ is in state $1$ (resp. state $0$), denoted by $x_j=1$ (resp. $x_j=0$), if it is active (resp. inactive), namely, if
nodes $j$ and ${j+1}$ belong to at least one common layer (resp. do not have any layer in common). Let $Q_{G_n, x^{(n)}}(m)$  denote the probability that $G_n$ is in state $x^{(n)}$ given that node $n+1$ belongs to $m$ layers.
By symmetry, $Q_{G_n, x^{(n)}}(m)$ is also the probability of state $x^{(n)}$ given that node $n+1$ belongs to an arbitrarily fixed set of $m$ layers.
Since no confusion occurs, henceforth we will drop the subscript  $G_{n}$ in both $Q_{G_n, x^{(m)}}$ and $Q_{G_n, x^{(n)}}(m)$ .
With a slight abuse of notation, $0^{(n)}=(0,\ldots,0)$ and $1^{(n)}=(1,\ldots,1)$, where each vector has $n$ entries.

The following recursion holds for $Q_{x^{(n)}}(m)$, $n\geq 2$,
\begin{align}
Q_{x^{(n)}}(m)&=\bar x_{n} \bar q^m \sum_{i=0}^{M-m} {M-m\choose i} q^i \bar q^{M-m-i}Q_{x^{(n-1)}}(i) \nonumber \\
&\quad +x_{n}  \sum_{j=1}^m {m\choose j} q^j \bar q^{m-j}  \sum_{i=0}^{M-m} {M-m\choose i} q^i \bar q^{M-m-i}  Q_{x^{(n-1)}}(i+j), \label{eq:Qxnm}
\end{align}
with $\bar q = 1- q$, $\bar x_n = 1- x_n$, and $Q_{x^{(0)}}(\cdot)=1$ by convention. In particular, 
\begin{equation}\label{eq:Qx2m}
Q_{x^{(1)}}(m)=x_1 (1-\bar q^m)+\bar x_1\bar q^m.
\end{equation}
The first term in the r.h.s. of \eqref{eq:Qxnm} accounts for the fact that if link $e_{n}=(n,n+1)$ is in state $x_{n}=0$
then node $n$ cannot belong to the same layer as node $n+1$ (this occurs with probability $(1-q)^m$ as node $n+1$ belongs to $m$ layers) 
but otherwise can belong to any of the  $M-m$ remaining layers,  
while the second term accounts for the fact that if link $e_{n}$ is in state $x_{n}=1$ then
node $n$ needs to share at least one layer with node $n+1$ but otherwise can belong to any other layer(s).\\

We first calculate $Q_{0^{(n)}}(m)$, the probability that links $e_1,\ldots,e_{n}$ are all inactive given that node $n+1$ belongs to $m$ layer.  This probability will turn out to be a key ingredient in  the calculation of $Q_{x^{(n)}}$.

\begin{proposition}[Calculation of $Q_{0^{(n)}}(m)$] 
\label{line-all-empty-m_layers}
For any integer  $m=0,1,\ldots,M$, $Q_{0^{(1)}}(m) = \bar q^m$, and 
$n\geq 2$,
\begin{align}
Q_{0^{(n)}}(m)&=
\bar q^m \left[1-(n-2)q^2 + P_{n-1}(q)\right]^m   \left[1-(n-1)q^2 + P_{n}(q)\right]^{M-m}, \label{eq:Q0nm}
\end{align}
where $P_1,\ldots,P_{n}$ are polynomials in the variable $q$, recursively defined by
\begin{equation}
P_{k}(q)=\bar q P_{k-1}(q) + q \bar q P_{k-2}(q)+ q^3 + (k-3) q^4, \label{def-Pk}
\end{equation}
for  $k=3,\ldots,n$, with
\begin{equation}
P_1(q)=P_2(q)\equiv 0. \label{P1-P2}
\end{equation}
\end{proposition}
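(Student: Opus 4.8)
The plan is to prove \eqref{eq:Q0nm} by induction on $n$, driving everything from a one-line specialization of the master recursion \eqref{eq:Qxnm}. Setting $x^{(n)}=0^{(n)}$ kills the $x_n$-term (since then $x_n=0$ and $\bar x_n=1$) and leaves the scalar recursion
\[
Q_{0^{(n)}}(m)=\bar q^m\sum_{i=0}^{M-m}\binom{M-m}{i}q^i\bar q^{M-m-i}\,Q_{0^{(n-1)}}(i),
\]
valid for $n\geq 2$, with base value $Q_{0^{(1)}}(m)=\bar q^m$. Its probabilistic reading is that, conditioned on node $n+1$ lying in $m$ layers, inactivity of $e_n$ forces node $n$ to avoid all $m$ of those layers (factor $\bar q^m$) while being free to join any number $i$ of the remaining $M-m$ layers, after which the tail $e_1,\dots,e_{n-1}$ contributes $Q_{0^{(n-1)}}(i)$.

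It is convenient to abbreviate $b_k:=1-(k-1)q^2+P_k(q)$, so that \eqref{P1-P2} gives $b_1=1$ and $b_2=1-q^2$, and the target \eqref{eq:Q0nm} reads $Q_{0^{(n)}}(m)=\bar q^m\,b_{n-1}^{\,m}\,b_n^{\,M-m}$. The heart of the argument is an algebraic reformulation of the definition \eqref{def-Pk}: I claim the $P_k$-recursion is exactly equivalent to the second-order linear recursion
\[
b_n=\bar q\,b_{n-1}+q\bar q\,b_{n-2},\qquad n\geq 3.
\]
Substituting $b_k=1-(k-1)q^2+P_k$ into this identity and using \eqref{def-Pk} to eliminate $\bar q P_{n-1}+q\bar q P_{n-2}=P_n-q^3-(n-3)q^4$, the $P_n$ terms cancel and the constant, $q^2$, $q^3$, and $q^4$ coefficients collapse to give precisely $1-(n-1)q^2+P_n$; this is the single computation that must be carried out in full.

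With that identity in hand the induction closes cleanly. For the base case $n=2$, plugging $Q_{0^{(1)}}(i)=\bar q^i$ into the scalar recursion and summing the binomial series gives $Q_{0^{(2)}}(m)=\bar q^m(\bar q+q\bar q)^{M-m}=\bar q^m(1-q^2)^{M-m}=\bar q^m b_1^m b_2^{M-m}$, as required. For the inductive step with $n\geq 3$, assume $Q_{0^{(n-1)}}(i)=\bar q^i b_{n-2}^i b_{n-1}^{M-i}$ for all $i$; the crucial bookkeeping move is to split $b_{n-1}^{M-i}=b_{n-1}^m\,b_{n-1}^{(M-m)-i}$ so that the summand regroups as $\binom{M-m}{i}(q\bar q\,b_{n-2})^i(\bar q\,b_{n-1})^{(M-m)-i}$. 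The binomial theorem then yields $(q\bar q\,b_{n-2}+\bar q\,b_{n-1})^{M-m}=b_n^{M-m}$ by the linear recursion, giving $Q_{0^{(n)}}(m)=\bar q^m b_{n-1}^m b_n^{M-m}$. Note that this regrouping never divides by $b_{n-1}$, so no positivity of the $b_k$ need be invoked.

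The only genuinely delicate part is the reformulation of \eqref{def-Pk} as the linear recursion for $b_k$: the inhomogeneous terms $q^3+(k-3)q^4$ in the definition of $P_k$ are engineered precisely so that the $q^3$ and $q^4$ coefficients vanish and the $q^2$ coefficient becomes $-(n-1)$. Once this identity is verified, the remaining steps are a specialization of the given recursion \eqref{eq:Qxnm} and a single application of the binomial theorem.
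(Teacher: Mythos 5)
Your proof is correct and follows essentially the same route as the paper's: induction on $n$, using the all-zero specialization of the master recursion for $Q_{x^{(n)}}(m)$, a single application of the binomial theorem, and the defining recursion \eqref{def-Pk} to identify the resulting base as $1-(n-1)q^2+P_n(q)$. The only difference is presentational: you extract the identity $b_n=\bar q\,b_{n-1}+q\bar q\,b_{n-2}$, with $b_k=1-(k-1)q^2+P_k(q)$, as a standalone preliminary step, whereas the paper carries out the same algebra inline within the induction step.
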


\begin{proof}
For $n=1$, the conclusion follows from \eqref{eq:Qx2m}.
For $n=2$, setting $x^{(2)} = 0^{(2)} $ in \prettyref{eq:Qxnm} yields
\[
Q_{0^{(2)}}(m)= \bar q^m \sum_{i=0}^{M-m} {M-m\choose i} q^i \bar q^{M-m-i}Q_{0^{(1)}}(i).
\]
Using $Q_{0^{(1)}}(i) = \bar q^i$, we obtain $Q_{0^{(3)}}(m)= \bar q^m (1-q^2)^{M-m}$, in agreement with \eqref{eq:Q0nm}.

We now use induction to complete the proof. Assume that (\ref{eq:Q0nm}) holds for $n=2,\ldots,n^\prime-1$ with $n^\prime\geq 3$, and let us show that it still holds for $n=n^\prime$.
Setting $x_1, \dots,x_{n'-1}$ to zero in \eqref{eq:Qxnm} and using the induction hypothesis and the binomial expansion yields
\begin{align*}
Q_{0^{(n^\prime)}}(m)&= \bar q^m \left[1-(n^\prime-2)q^2+P_{n^\prime-1}(q)\right]^m\\
&\quad \times  \Bigl[1-(n^\prime-1)q^2  +\bar q P_{n^\prime-1}(q) + q \bar q  P_{n^\prime-2}(q) +q^3+(n^\prime-3)q^4 \Bigr]^{M-m}\\
&= \bar q^m \left[1-(n^\prime-2)q^2+P_{n^\prime-1}(q)\right]^m  \left[1-(n^\prime-1) q^2 + P_{n^\prime}(q)\right]^{M-m},
\end{align*}
where the latter equality comes from the definition of $P_{n^\prime}(q)$. This completes the proof.
\end{proof}

\begin{corollary}[Calculation of $Q_{0^{(n)}}$]
\label{corr:line-all-empty}
For any integer $n\geq 1$,
\[
Q_{0^{(n)}}=\left[1-nq^2 +P_{n+1}(q)\right]^M.
\]
\end{corollary}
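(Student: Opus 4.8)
The plan is to recover the unconditional probability $Q_{0^{(n)}}$ by averaging the conditional probabilities $Q_{0^{(n)}}(m)$ of Proposition~\ref{line-all-empty-m_layers} over the (random) number of layers to which node $n+1$ belongs. Since $Z_{1,n+1},\dots,Z_{M,n+1}$ are independent Bernoulli($q$) variables, that number is Binomial($M,q$), so I would write
\[
Q_{0^{(n)}} = \sum_{m=0}^{M} {M \choose m}\, q^m \bar q^{M-m}\, Q_{0^{(n)}}(m).
\]
I would first treat $n\geq 2$, where \eqref{eq:Q0nm} applies. Abbreviating $a := 1-(n-2)q^2+P_{n-1}(q)$ and $b := 1-(n-1)q^2+P_n(q)$, that formula reads $Q_{0^{(n)}}(m)=\bar q^m a^m b^{M-m}$. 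Substituting and regrouping the factors carrying the exponent $m$ gives
\[
Q_{0^{(n)}} = \sum_{m=0}^{M} {M \choose m} (q\bar q\, a)^m (\bar q\, b)^{M-m} = \bigl(q\bar q\, a + \bar q\, b\bigr)^M,
\]
the last step being the binomial theorem. It then only remains to identify the base of this $M$-th power.

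The crux — and essentially the only nontrivial point — is the algebraic identity
\[
q\bar q\, a + \bar q\, b = 1 - nq^2 + P_{n+1}(q).
\]
To establish it I would first use $q\bar q + \bar q = 1-q^2$ to combine the constant parts of $a$ and $b$, leaving
\[
q\bar q\, a + \bar q\, b = 1 - q^2 - (n-2)q^3\bar q - (n-1)q^2\bar q + \bar q P_n(q) + q\bar q P_{n-1}(q).
\]
The key observation is that the polynomial part $\bar q P_n(q) + q\bar q P_{n-1}(q)$ is exactly the recursive portion of $P_{n+1}$: applying \eqref{def-Pk} with $k=n+1$ gives $\bar q P_n(q) + q\bar q P_{n-1}(q) = P_{n+1}(q) - q^3 - (n-2)q^4$. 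After substituting this and expanding $\bar q = 1-q$ in the remaining monomials, I expect the $q^3$ contributions (namely $-(n-2)+(n-1)-1$) and the $q^4$ contributions to cancel, while the $q^2$ terms collapse to $-nq^2$, producing the claimed base. This cancellation is precisely what the recursion \eqref{def-Pk} defining the $P_k$'s was engineered to produce, so beyond careful bookkeeping I anticipate no real obstacle here.

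Finally I would dispatch $n=1$ separately, since \eqref{eq:Q0nm} is stated only for $n\geq 2$. In that case $Q_{0^{(1)}}(m)=\bar q^m$, and the same averaging yields $Q_{0^{(1)}} = (q\bar q+\bar q)^M = (1-q^2)^M$, which matches $[1-q^2+P_2(q)]^M$ because $P_2\equiv 0$ by \eqref{P1-P2}. Combining the two cases completes the proof.
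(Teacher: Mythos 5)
Your proposal is correct and takes essentially the same approach as the paper: condition on the Binomial$(M,q)$ number of layers containing node $n+1$ (the identity \eqref{Q0k}), substitute the formula of Proposition~\ref{line-all-empty-m_layers}, and collapse the sum with the binomial theorem. The paper dismisses the remaining algebra as ``straightforward''; you have simply made explicit the key identity $q\bar q\,a+\bar q\,b=1-nq^{2}+P_{n+1}(q)$, which indeed follows from the recursion \eqref{def-Pk} applied with $k=n+1$, together with the separate (and necessary) treatment of $n=1$ via $P_2\equiv 0$.
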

The proof is straightforward by using Proposition \ref{line-all-empty-m_layers} together with the identify 
\begin{equation}
Q_{0^{(n)}}=\sum_{m=0}^M {M\choose m} q^m (1-q)^{M-m} Q_{0^{(n)}}(m).
\label{Q0k}
\end{equation}

We are now in position to find $Q_{x^{(n)}}$, the probability that links $e_1,\ldots,e_{n}$ are in state $(x_1,\ldots,x_{n})$. This result will be an easy consequence 
(see Proposition \ref{prop:line})
of the next proposition that determines $Q_{x^{(n)}_k}(m)$, the probability that links $e_k,\ldots,e_{n}$ are in state $(x_k,\ldots,x_{n})$ given that node $v_k$ belongs to $m$ layers, 
for $k=n,\ldots,1$.

\begin{proposition}[Calculation of $Q_{x^{(n)}}(m)$]
\label{prop:line-m_layers}
For any integer $n\geq 1$, $m=0,1,\ldots,M$,
\begin{equation}
Q_{x^{(n)}}(m)= \sum_{j=0}^{n} h_{j} Q_{0^{(n-j)}} (m)  \prod_{l=j+1}^{n}(\bar x_{l} -x_{l}),  \label{qxkk-m}
\end{equation}
where $Q_{0^{(0)}}(\cdot)\equiv 1$, $\{Q_{0^{(j)}}(m)\}_{j=1}^{n-1}$ is given in (\ref{eq:Q0nm}),  
and $\{h_j\}_{j=0}^{n}$ are mappings depending on $x^{(n)}$ and $q$ recursively defined by
\begin{equation}
h_{j} = x_{j}\sum_{r=0}^{j-1} \left[1-(j-1-r)q^2 +P_{j-r}(q)\right]^M  h_r   \prod_{l=r+1}^{j-1}(\bar x_{l} -x_{l}),
 \label{def-hk-m}
\end{equation}
for $j=2,\ldots,n$, with
\begin{equation}
h_1=x_1 \hbox{ and } h_{0}=1  \label{def-hk_hk+1}.
\end{equation}
\end{proposition}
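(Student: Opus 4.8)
The plan is to prove \eqref{qxkk-m} by induction on $n$, driving the induction with the one-step recursion \eqref{eq:Qxnm}, which peels off the rightmost link $e_n$ (equivalently, conditions on node $n+1$ belonging to $m$ layers and reduces to quantities conditioned on node $n$). The base case $n=1$ is a direct check: substituting the conventions $h_0=1$, $h_1=x_1$, $Q_{0^{(0)}}(\cdot)\equiv1$ and $Q_{0^{(1)}}(m)=\bar q^m$ into the right-hand side of \eqref{qxkk-m} gives $\bar q^m(\bar x_1-x_1)+x_1=\bar x_1\bar q^m+x_1(1-\bar q^m)$, which matches \eqref{eq:Qx2m}. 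For the inductive step I would assume \eqref{qxkk-m} for the line $e_1,\dots,e_{n-1}$ (crucially, $h_0,\dots,h_{n-1}$ depend only on $x_1,\dots,x_{n-1}$ by \eqref{def-hk-m}, so they are inherited unchanged), insert this expression for $Q_{x^{(n-1)}}(i)$ and $Q_{x^{(n-1)}}(i+j)$ into the two terms of \eqref{eq:Qxnm}, and interchange summations so that the binomial sums act on each $Q_{0^{(n-1-j')}}(\cdot)$ separately.

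The computational core is to evaluate two binomial convolutions, both of which collapse via the algebra already used in the proof of \prettyref{line-all-empty-m_layers}. Writing $Q_{0^{(s)}}(t)=\bar q^t C_s^t D_s^{M-t}$ with $C_s=1-(s-2)q^2+P_{s-1}(q)$ and $D_s=1-(s-1)q^2+P_s(q)$, the identities $C_{s+1}=D_s$ and $\bar q(qC_s+D_s)=D_{s+1}$ — the latter being the recursion \eqref{def-Pk} for $P$ in disguise — give, after the binomial collapse,
\begin{align*}
\bar q^m\sum_{i=0}^{M-m}\binom{M-m}{i}q^i\bar q^{M-m-i}Q_{0^{(s)}}(i)&=Q_{0^{(s+1)}}(m),\\
\sum_{j=1}^{m}\binom{m}{j}q^j\bar q^{m-j}\sum_{i=0}^{M-m}\binom{M-m}{i}q^i\bar q^{M-m-i}Q_{0^{(s)}}(i+j)&=Q_{0^{(s)}}-Q_{0^{(s+1)}}(m),
\end{align*}
where $Q_{0^{(s)}}=D_{s+1}^M$ is the unconditional probability from \prettyref{corr:line-all-empty}. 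The first identity handles the $x_n=0$ branch; the second, which is the genuinely new ingredient because it carries the extra sum over the number $j$ of shared layers, handles the $x_n=1$ branch. The degenerate cases $s\in\{0,1\}$ are checked directly against $Q_{0^{(0)}}(\cdot)\equiv1$ and $Q_{0^{(1)}}(m)=\bar q^m$.

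Setting $s=n-1-j'$ (so $s+1=n-j'$) and combining the two branches yields
\[
Q_{x^{(n)}}(m)=\sum_{j'=0}^{n-1}h_{j'}\Big(\prod_{l=j'+1}^{n-1}(\bar x_l-x_l)\Big)\Big[(\bar x_n-x_n)\,Q_{0^{(n-j')}}(m)+x_n\,Q_{0^{(n-1-j')}}\Big].
\]
The terms carrying $(\bar x_n-x_n)\,Q_{0^{(n-j')}}(m)$ absorb that factor into $\prod_{l=j'+1}^{n}(\bar x_l-x_l)$ and reproduce exactly the $j=0,\dots,n-1$ terms of \eqref{qxkk-m}. It then remains to show that the residual $x_n\sum_{j'=0}^{n-1}h_{j'}\,Q_{0^{(n-1-j')}}\prod_{l=j'+1}^{n-1}(\bar x_l-x_l)$ equals the missing $j=n$ term $h_nQ_{0^{(0)}}(m)=h_n$; but after rewriting $Q_{0^{(n-1-j')}}=[1-(n-1-j')q^2+P_{n-j'}(q)]^M$ through \prettyref{corr:line-all-empty}, this is verbatim the defining recursion \eqref{def-hk-m} for $h_n$. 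This closes the induction.

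The anticipated obstacle is bookkeeping rather than conceptual: one must (i) get the second convolution right, retaining the constraint $j\ge1$ that produces the difference $Q_{0^{(s)}}-Q_{0^{(s+1)}}(m)$ and suppresses a spurious $j=0$ contribution, and (ii) verify that the two branches combine so that precisely one factor $(\bar x_n-x_n)$ is appended to the running product while the leftover unconditional pieces reassemble — with the correct inclusion–exclusion signs $\bar x_l-x_l$ — into the recursion for $h_n$. Tracking these signs and index ranges, including the length-$0$ and length-$1$ blocks, is where the argument is most error-prone.
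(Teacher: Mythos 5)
Your proposal is correct and follows essentially the same route as the paper: induction on $n$ driven by the one-step recursion \eqref{eq:Qxnm}, with the two binomial-collapse identities you derive being exactly the paper's Appendix Lemmas \ref{lem1} and \ref{lem2} (your $\bar q(qC_s+D_s)=D_{s+1}$ is their restatement of \eqref{def-Pk}), and the residual $x_n$-branch reassembling into the defining recursion \eqref{def-hk-m} for $h_n$ just as in the paper's equation \eqref{proof-induc}. The only cosmetic difference is that you prove the convolution identities inline rather than quoting them as separate lemmas.
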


\begin{proof}
We use the convention $\prod_{l=1}^0 \cdot =1$. Letting $n=1$ in (\ref{qxkk-m}) and using (\ref{eq:Q0nm}), (\ref{def-hk_hk+1}) and $Q_{0^{(0)}}(\cdot)\equiv 0$ yield
\begin{eqnarray*}
Q_{x^{(1)}}(m)&=&h_{0} Q_{0^{(1)}}(m) \left(\bar x_1 - x_1\right) +h_1 Q_{0^{(0)}}(m)\\
&=& x_1\left(1-\bar q^m\right) +\bar x_1 \bar q^m,
\end{eqnarray*}
which is true by (\ref{eq:Qx2m}).
Assume that (\ref{qxkk-m}) is true for $n=1,\ldots,n^\prime-1$. We show that it is still true for $n=n^\prime$.

From the induction assumption (\ref{qxkk-m}) with $n=n^\prime-1$, relation (\ref{Qxkm}) and Lemma \ref{lem2} both given in Appendix \ref{sec:app}, we get 
\begin{align}
Q_{x^{(n^\prime)}}(m)&=\bar x_{n^\prime} \bar q^m \sum_{j=0}^{n^\prime-1} h_j F(Q_{0^{n^\prime-1-j}},m) \prod_{l=j}^{n^\prime-1} (\bar x_l-x_l) + x_{n^\prime} \sum_{j=0}^{n^\prime-1} h_j G(Q_{0^{n^\prime-1-j}},m) \prod_{l=j}^{n^\prime-1} (\bar x_l-x_l)\nonumber\\
&= \sum_{j=0}^{n^\prime-1} h_j Q_{0^{n^\prime-j}}(m) \prod_{l=j+1}^{n^\prime} (\bar x_l - x_l) +  \sum_{j=0}^{n^\prime-1} \left[1-(m^\prime-1-j)q^2+P_{n^\prime-j}\right]^M  h_j \prod_{l=j+1}^{n^\prime-1}(\bar x_l - x_l)\nonumber\\
&= \sum_{j=0}^{n^\prime-1} h_j Q_{0^{n^\prime-j}}(m) \prod_{l=j+1}^{n^\prime} (\bar x_l - x_l) + h_{n^\prime}\label{proof-induc}\\
&=  \sum_{j=0}^{n^\prime} h_j Q_{0^{n^\prime-j}}(m) \prod_{l=j+1}^{n^\prime} (\bar x_l - x_l),\nonumber
\end{align}
where (\ref{proof-induc}) follows from the definition of $h_{n^\prime}$ given in (\ref{def-hk-m}). This concludes the induction step.
\end{proof}

\begin{proposition}[Calculation of $Q_{x^{(n)}}$]\hfill
\label{prop:line}

For any integer $n\geq 1$,
\begin{equation}
Q_{x^{(n)}}= \sum_{j=0}^{n} h_{j} Q_{0^{(n-j)}}  \prod_{l=j+1}^{n}(\bar x_{l} -x_{l}),
\label{qxkm}
\end{equation}
where $Q_ {0^{(0)}}=1$, and $Q_ {0^{(1)}}\ldots, Q_ {0^{(n)}}$ are given in Corollary \ref{corr:line-all-empty} and $h_0,\ldots,h_{n}$ are given  in (\ref{def-hk-m})-(\ref{def-hk_hk+1}).
\end{proposition}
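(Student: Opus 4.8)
The plan is to deduce the unconditional probability $Q_{x^{(n)}}$ from the conditional probabilities $Q_{x^{(n)}}(m)$ already obtained in \prettyref{prop:line-m_layers}, by averaging over the number of layers $m$ to which the conditioning endpoint node belongs. Since each of the $M$ layers independently contains that node with probability $q$, this number is distributed as Binomial$(M,q)$, exactly as in the passage from $Q_{0^{(n)}}(m)$ to $Q_{0^{(n)}}$ in \eqref{Q0k}. Hence the law of total probability gives
\[
Q_{x^{(n)}}=\sum_{m=0}^M {M\choose m} q^m \bar q^{M-m}\, Q_{x^{(n)}}(m),
\]
which is the analogue of \eqref{Q0k} for an arbitrary configuration $x^{(n)}$.

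First I would substitute the closed form \eqref{qxkk-m} for $Q_{x^{(n)}}(m)$ into this average. The coefficients $h_j$ defined in \eqref{def-hk-m}--\eqref{def-hk_hk+1} and the sign products $\prod_{l=j+1}^{n}(\bar x_l - x_l)$ depend only on the configuration $x^{(n)}$ and on $q$, and carry no dependence on $m$; they may therefore be factored out of the sum over $m$. Interchanging the two finite sums yields
\[
Q_{x^{(n)}}=\sum_{j=0}^{n} h_j \left[\,\sum_{m=0}^M {M\choose m} q^m \bar q^{M-m}\, Q_{0^{(n-j)}}(m)\right] \prod_{l=j+1}^{n}(\bar x_l - x_l).
\]
Next I would identify the bracketed inner sum. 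For $0\le j\le n-1$ it is precisely the Binomial average of $Q_{0^{(n-j)}}(m)$, so by \eqref{Q0k} with $n$ replaced by $n-j$ it equals $Q_{0^{(n-j)}}$ as given in \prettyref{corr:line-all-empty}. For the boundary term $j=n$, the convention $Q_{0^{(0)}}(\cdot)\equiv 1$ makes the inner sum collapse to $\sum_{m=0}^M {M\choose m} q^m \bar q^{M-m}=1=Q_{0^{(0)}}$. Substituting these identifications reproduces \eqref{qxkm} term by term.

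The argument is short because all of the analytic effort has already been spent in \prettyref{prop:line-m_layers}. The one point I expect to require care---and which I regard as the crux---is the justification of the averaging identity itself: one must verify that the layer membership count of the conditioning node is genuinely Binomial$(M,q)$ and independent of the rest of the construction, so that conditioning and then summing is legitimate, and that neither the $h_j$ nor the sign products hide any dependence on $m$. Once these facts are confirmed, the remaining steps are a routine interchange of finite summations and a termwise application of \eqref{Q0k}.
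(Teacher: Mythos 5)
Your proposal is correct and is essentially identical to the paper's proof: the paper likewise obtains $Q_{x^{(n)}}$ from the identity $Q_{x^{(n)}}=\sum_{m=0}^M \binom{M}{m} q^m \bar q^{M-m} Q_{x^{(n)}}(m)$ combined with Proposition \ref{prop:line-m_layers} and \eqref{Q0k}. Your write-up merely makes explicit the interchange of finite sums and the $j=n$ boundary term, which the paper leaves implicit.
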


\begin{proof}
The proof  follows from the identity
\[
Q_{x^{(n)}}=\sum_{m=0}^M {M\choose m} q^m \bar q^{M-m}Q_{x^{(n)}}(m)
\]
together with Proposition \ref{prop:line-m_layers} and (\ref{Q0k}).
\end{proof}

We conclude this section by calculating the expected value and the probability generating function (pgf) of the size of the connected component a node belongs to, and the pgf of the number of active links.

For a path of length $n$ in \prettyref{fig:line-network}, let  $C_{n,i}\in\{1,\ldots,n+1\}$ denote the random variable for the size of the connected component that node $i$ belongs to, $i=1,..., n+1$. Note that $C_{n,i}$ can be rewritten as
\[
C_{n,i} = C'_{n,i} + C''_{n,i} - 1,
\]
where $C'_{n,i}$ and $C''_{n,i}$ count the number of nodes to the left and right of node $i$ (including node $i$) in the same connected component, respectively. The subtraction by 1 accounts for the double counting of node $i$.

Let $\hat C_{n,i}(z)=\E[z^{C_{n,i}}]$ denote the pgf for the distribution of $C_{n,i}$, and 
\[
\hat C_{n}(z;m) :=\E[z^{C_{n,n+1}}\,|\, \hbox{node $n+1$ belongs to } m \hbox{ layers}  ]
\]
the conditional pgf of $C_{n,n+1}$. Note that given the number of layers that node $i$ belongs to, $C'_{n,i}$ and $C''_{n,i}$ are conditionally independent and have the same conditional distributions as $C_{i-1,i}$ and $C_{n-i+1, n-i+2}$, respectively. Therefore,
\begin{equation}
\hat C_{n,i}(z)=\frac{1}{z}\sum_{m=0}^M {M\choose m}q^m \bar q^{M-m}\hat C_{i-1}(z;m)  \hat C_{n-i+1}(z;m).
\label{Cniz}
 \end{equation}
We are left with finding $\hat C_{n}(z;m)$. Note that $\hat C_0(z;m) = z$ and $\hat C_1(z;m) = \bar q^m z + (1-\bar q^m) z^2$. Define the vector $y^{(i)}$ of size $i\geq 2$ by  $y^{(i)}=(0,1,\ldots,1)$. For $n\geq 2$,  \prettyref{prop:line-m_layers} yields
\begin{align}
\hat C_{n}(z;m)&= Q_{0^{(1)}}(m) z + \sum_{i=2}^{n}z^ i Q_{y^{(i)}}(m) + z^{n+1}Q _{1^{(n)}}(m)\nonumber\\
&=  Q_{0^{(1)}}(m)  z + \sum_{i=2}^n z^i(-1)^{i-1} \hat h_{0,0}Q_{0^{(i)}}(m) +\sum_{i=2}^{n} z^i \sum_{j=1}^{i}  (-1)^{i-j} \hat h_{0,j} Q_{0^{(i-j)}}(m) \nonumber\\
& +z^{n+1}\sum_{j=0}^n  (-1)^{n-j} \hat h_{1,j} Q_{0^{(n-j)}}(m),
\label{Ck1}
\end{align}
where $Q_{0^{(1)}}(m),\ldots, Q_{0^{(n)}}(m)$  are given in Proposition  \ref{line-all-empty-m_layers} 
(with $Q_{0^{(0)}}(\cdot)=1$), $P_1(q),\ldots,P_n(q)$ are recursively defined in (\ref{def-Pk}) and (\ref{P1-P2}), and for $b \in \{0,1\}$, $\hat h_{b,0} = 1$, $\hat h_{b,1} = b$, and
\[
\hat h_{b,j}=(-1)^{j-b} \left(1-(j-1)q^2 +P_{j}(q)\right)^M +\sum_{r=1}^{j-1} (-1)^{j-r -1} \left(1-(j-1-r)q^2 +P_{j-r}(q)\right)^M  \hat h_{b,r}
\]
for $j=2,3,\dots,n$.

Let $\bar C_{n,i} =\E [C_{n,i}]$ be the expected size of the connected component node $i$ belongs to. 
From $\bar C_{n,i}=d\hat C_{n,i}(z)/dz|_{z=1}$,  (\ref{Cniz}) and  \eqref{Ck1}, we obtain  
\[
\bar C_{n,i} = \bar C_{i-1,1} + \bar C_{n-i+1,1} - 1,
\]
and
\[
\bar C_{n,1}=Q_{0^{(1)}} + \sum_{i=2}^n i(-1)^{i-1} \hat h_{0,0}Q_{0^{(i)}}+ \sum_{i=2}^{n} i \sum_{j=1}^{i}  (-1)^{i-j} \hat h_{0,j} Q_{0^{(i-j)}} +(n+1)\sum_{j=0}^n  (-1)^{n-j} \hat h_{1,j} Q_{0^{(n-j)}},
\]
where $Q_{0^{(1)}},\ldots,Q_{0^{(n)}}$  are given in Corollary  \ref{corr:line-all-empty} (and $Q_{0^{(0)}}=1$).
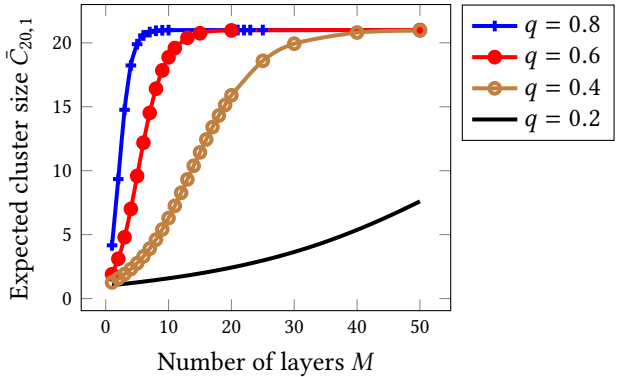
\begin{figure}[h]
\begin{center}
\begin{tikzpicture}
\begin{axis}[small, name=plot2,
   legend pos=outer north east, 
   axis on top,
   ylabel near ticks,
   xlabel near ticks,
   ylabel = {Expected cluster size $\bar C_{20,1}$},
   xlabel = {Number of layers $M$},
]
 \addplot  [smooth, mark=+, blue] table [x=a, y=b] 
 {                         a             b
                          1	4.163106496
                         2  	9.343847708
                         3	14.76208392
                         4	18.22997835
                         5	19.88956040
                         6	20.57702610
                         7	20.84293114
                         8 	20.94243512
                         9      20.97905350
                          10      20.99240937
                        21 20.99999989
                        22 20.99999996
                        23 20.99999997
                        25  21                                               
                        50  21 
};
\addlegendentry{$q=0.8$}

\addplot  [smooth, mark=*, red] table [x=a, y=b] 
{                         a             b
                         1  1.899967087
                         2  3.111115783
                         3  4.794176613
                         4  7.008073694
                         5  9.583883131
                         6  12.18923908
                         7  14.51896785
                         8  16.41116077
                         9 17.84296556
                        10 18.87304688
                        11 19.58832458
                        13  20.39568104
                        15   20.74682213
                        20   20.97226030
                        50  21
};
\addlegendentry{$q=0.6$}

\addplot  [smooth, mark=o , brown]table [x=a, y=b] 
 {                         a             b
                             1  1.266666661
                         2  1.567976967
                         3  1.911928042
                         4  2.307572349
                         5  2.765076770
                         6  3.295271768
                         7  3.908352888
                         8  4.611752367
                         9  5.407749714
                        10 6.291695770
                        11    7.251540856
                        12    8.268835507
                        13    9.320851765
                        14    10.38319772
                        15    11.43230741
                        16    12.44738107
                        17    13.41159472
                        18    14.31259855
                        19    15.14244435
                        20    15.89712711
	                25    18.59211217 
                        30    19.93338816
                        40    20.80585003
                        50    20.96573203
};
\addlegendentry{$q=0.4$}

\addplot  [color=black, mark=] table [x=a, y=b] 
 {                         a             b
                        1 1.050000000
                         2 1.101693549
                         3  1.155158493
                         4  1.210475376
                         5 1.267727580
                         6  1.327001488
                         7 1.388386630
                         8  1.451975847
                         9 1.517865455
                        10 1.586155412
                        11    1.656949484
                        12    1.730355411
                        13    1.806485072
                        14    1.885454616
                        15    1.967384605
                        16    2.052400075
                        17    2.140630565
                        18    2.232210057
                        19    2.327276803
                        20    2.425973040
                        21    2.528444526
                        22    2.634839908
                        23    2.745309882
                        24    2.860006148
                        25    2.979080140
                        26    3.102681547
                        27    3.230956597
                        28    3.364046216
                        29    3.502083986
                        30    3.645194023
                        31    3.793488766
                        32    3.947066780
                        33    4.106010559
                        34    4.270384471
                        35    4.440232801
                        36    4.615578014
                        37    4.796419250
                        38    4.982731090
                        39    5.174462614
                        40    5.371536813
                        41    5.573850300
                        42    5.781273378
                        43    5.993650432
                        44    6.210800663
                        45    6.432519085
                        46    6.658577841
                        47    6.888727739
                        48    7.122700008
                        49    7.360208244
                        50    7.600950500
                        };
\addlegendentry{$q=0.2$}

\end{axis}
\end{tikzpicture}
\caption{Expected size $\bar C_{20,1}$ of the cluster containing node $1$ in the line network of  \prettyref{fig:line-network} as a function of the number of layers $M$,  for $n=20$ and $q\in\{0.2,0.4,0.6,0.8\}$. }                      
\label{fig-line-expected-cluster-size}
\end{center}
\end{figure}

\begin{figure}[h]
\begin{center}
\begin{tikzpicture}
\begin{axis}[small, name=plot2,
   legend  pos=outer north east, 
   axis on top,
   ylabel near ticks,
   xlabel near ticks,
   ylabel = {Expected cluster size $\bar C_{20,1}$},
   xlabel = {Number of layers $M$},
]

\addplot  [color=blue] table [x=a, y=b] 
 {                         a             b
                         2  2.107097221
                         3  1.547096933
                         4  1.360451006
                         5 1.267727581
                         9 1.130742721
                        13 1.086145217
                        17 1.064163728
                        21 1.051098355
                        25 1.042445588
                        29 1.036295279
                        33 1.031699934
                        37 1.028136471
                        41 1.025292633
                        45 1.022970547
                        49 1.021038759
};
\addlegendentry{$q=M^{-1}$}

\addplot  table [x=a, y=b] 
 {                         a             b
                         2 5.349414064
                         6 3.466524162
                        10 3.212118325
                        14 3.104113450
                        18 3.042422692
                        22 3.001755404
                        26 2.972567387
                        30 2.950403268
                        34 2.932883732
                        38 2.918613627
                        42 2.906716476
                        46 2.896611634
                        50 2.887897896
};
\addlegendentry{$q=M^{-1/2}$}

\addplot  table [x=a, y=b] 
 {                         a             b
                         2 8.989749997
                         6 9.047208889
                        10 10.41862901
                        14 11.63260947
                        18 12.66264506
                        22 13.53622581
                        26 14.28168691
                        30 14.92248645
                        34 15.47729286
                        38 15.96088664
                        42 16.38502382
                        46 16.75912703
                        50 17.09081293
};
\addlegendentry{$q=M^{-1/3}$}
\end{axis}
\end{tikzpicture}
\caption{Expected size $\bar C_{20,1}$ of the cluster containing node $1$ in the line network of  \prettyref{fig:line-network} as a function of the number of layers $M$,  for  $n=20$ and $q\in\{M^{-1},M^{-1/2},M^{-1/3}\}$.}                     
\label{fig-line-scalingM}
\end{center}
\end{figure}

Let $L_n$ denote the number of active links in a line of length $n\geq 1$ and $\bar L_n = \E[L_n]$. Note that $\bar L_1 = 1-(1-q^2)^m$ and $\bar L_n = n \bar L_1$. Let $\hat L_n(z) = \E[z^{L_n}]$ be the pgf of $L_n$.  Note that $\hat L_n(z)$ is expressed as
\[ 
\hat L_n(z) = \sum_{m=0}^M \binom{M}{m} q^m \bar q^{M-m}\hat L_n^{(m)}(z), \quad n\geq 1, 
\]
where $\hat L_n^{(m)}(z) = \E[L_n\mid \text{node $n+1$ belongs to $m$ layers}]$ is the conditional pgf 
for $m=0,\ldots , M$.  The following recursion holds for $L_n^{(m)}(z)$, $n\geq 2$, 
\begin{equation}\label{eq:Qxnm}
\hat L_n^{(m)}(z) =
\bar q^m \sum_{i=0}^{M-m} {M-m\choose i} q^i \bar q^{M-m-i}\hat L_{n-1}^{(i)}(z)+ z \sum_{j=1}^m {m\choose j} q^j \bar q^{m-j}  \sum_{i=0}^{M-m} {M-m\choose i} q^i \bar q^{M-m-i} \hat L_{n-1}^{(i+j)}(z) ,  
\end{equation}
with  $\hat L_1^{(m)}(z) =\bar q^m + z (1- \bar q^m )$ for $m=0,1,\ldots,M$, and $\sum_{j=1}^0 \cdot = 0$ by convention.  Note that $\bar L_n=\sum_{m=0}^M {M\choose m} q^m \bar q^{M-m} d\hat L^{(m)}_n(z)/dz|_{z=1}$, but it is much easier to use the formula $\bar L_n = n[1-(1-q^2)^m]$.

Figures \ref{fig-line-expected-cluster-size} and \ref{fig-line-expected-number-active-links} display the mappings $M\to \bar C_{n,1}$ and  $M\to \bar L_n$ for $n=20$, respectively, when $q\in \{0.2, 0.4, 0.6, 0.8 \}$.
It shows the impact of having a finite number of layers on these metrics.
Figures \ref{fig-line-scalingM} and \ref{fig-line-expected-number-active-links-scaling} investigate the behavior of these mappings  when $q\in\{M^{-1}, M^{-1/2}, M^{-1/3}\}$.  These plots show that both $\bar C_{n,1}$ and $\bar L_n$ scale
with $M$ as $q=1/\sqrt{M}$. This result is rooted in the result that the limit of $1-(1-q^2)^M$ -- the probability that a link is active -- is non-zero as $M\uparrow \infty$ when $q=1/\sqrt{M}$ (this limit is  $1-e^{-1}$).  The asymptotic behavior of a multilayer network as $M\to \infty$ is investigated in depth  in Section \ref{sec:asymp}.

\begin{figure}[h]
\begin{center}
\begin{tikzpicture}
\begin{axis}[small, name=plot2,
   legend  pos=outer north east, 
   axis on top,
   ylabel near ticks,
   xlabel near ticks,
   ylabel = {Expected number of active nodes $\bar L_{20}$},
   xlabel = {Number of layers $M$},
]
 \addplot  [smooth, mark=+, blue] table [x=a, y=b] 
 {                         a             b
                         1 12.80000000
                         2 17.40800000
                         3 19.06688000
                         4 19.66407680
                         5 19.87906763
                         6 19.95646434
                         7 19.98432718
                         8 19.99435778
                         9 19.99796879
                        10 19.99926878
};
\addlegendentry{$q=0.8$}

\addplot  [smooth, mark=*, red]  table [x=a, y=b] 
 {                         a             b
                         1  7.200000000
                         2 11.80800000
                         3  14.75712000
                         4 16.64455680
                         5 17.85251636
                         6 18.62561046
                         7 19.12039072
                         8 19.43705004
                         9  19.63971205
                        10 19.76941569
};
\addlegendentry{$q=0.6$}

\addplot   [smooth, mark=o, brown] table [x=a, y=b] 
 {                         a             b
                         1 3.200000000
                         2 5.888000000
                         3 8.145920000
                         4 10.04257280
                         5  11.63576115
                         6 12.97403937
                         7 14.09819307
                         8  15.04248218
                         9  15.83568503
                        10 16.50197543
};
\addlegendentry{$q=0.4$}

\addplot   [smooth, mark=, black] table [x=a, y=b] 
 {                         a             b
                        1  0.8000000000
                         2 1.568000000
                         3  2.305280000
                         4  3.013068799
                         5  3.692546047
                         6 4.344844201
                         7 4.971050438
                         8 5.572208420
                         9 6.149320083
                        10  6.703347281
 };
 \addlegendentry{$q=0.2$}
\end{axis}
\end{tikzpicture}
\caption{Expected number $\bar L_{20}$ of active links in a line network with $20$ links as a  function of the number of layers $M$, for $q\in\{0.2,0.4,0.6,0.8\}$.  }                    
\label{fig-line-expected-number-active-links}
\end{center}
\end{figure}
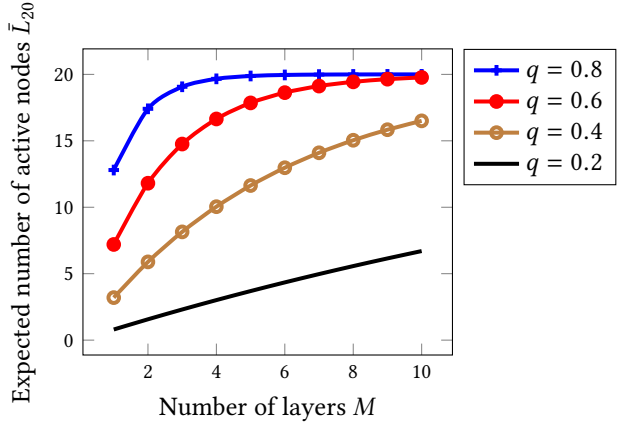

\begin{figure}[h]
\begin{center}
\begin{tikzpicture}
\begin{axis}[small, name=plot2,
   legend  pos=outer north east, 
   axis on top,
   ylabel near ticks,
   xlabel near ticks,
   ylabel = {Expected number of active nodes $\bar L_{20}$},
   xlabel = {Number of layers $M$},
]

\addplot [color=blue, mark=] table [x=a, y=b]   
 {                         a             b
                         5 3.692546048
                        10 1.912358500
                        15 1.292640206
                        20 0.9766024949
                        25  0.7848267690
                        30 0.6560364808
                        35 0.5635692880
                        40  0.4939542145
                        45  0.4396499130
                        50 0.3961049705
};
\addlegendentry{$q=M^{-1}$}

\addplot  table [x=a, y=b] 
 {                         a             b
                         5 13.44639998
                        10 13.02643123
                        15 12.89471267
                        20 12.83028153
                        25 12.79206567
                        30 12.76676970
                        35 12.74878899
                        40 12.73535119
                        45 12.72492762
                        50 12.71660636
};
\addlegendentry{$q=M^{-1/2}$}

\addplot  table [x=a, y=b] 
 {                         a             b
                         5 17.53296484
                        10 18.23285400
                        15 18.64825040
                        20 18.91829111
                        25 19.10765930
                        30 19.24758531
                        35 19.35495478
                        40 19.43972496
                        45 19.50816374
                        50 19.56442064
};
\addlegendentry{$q=M^{-1/3}$}
\end{axis}
\end{tikzpicture}
\caption{Expected number $\bar L_{20}$ of active links in a line network with $20$ links as a  function of the number of layers $M$,  for $q\in\{M^{-1},M^{-1/2},M^{-1/3}\}$.}                   
\label{fig-line-expected-number-active-links-scaling}
\end{center}
\end{figure}
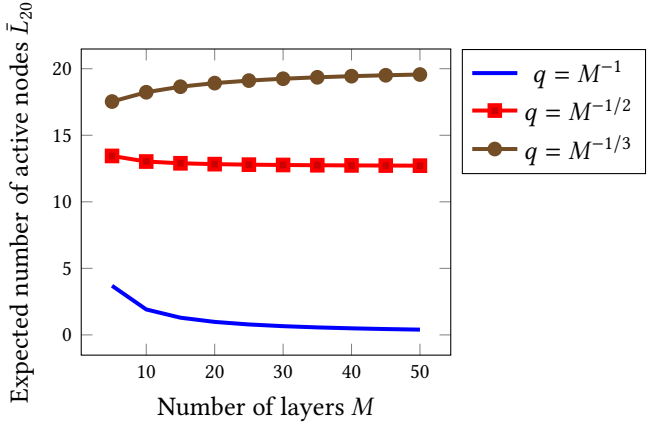

\subsection{Recursion for Multilayer Trees}
\label{ssec:tree}

In this section, we consider the case where the underlying graph is a tree $T$. We develop a recursion that provides a pseudo polynomial time algorithm to compute the probability of any link configuration. The recursion applies to all parameter settings of the model introduced in \prettyref{sec:model}.

Pick any root $r$ for $T$. For $v\in V$, let $T_v$ denote the subtree rooted at $v$. With a slight abuse of notation, let $T_v(x)$ be  the event that the links in $T_v$ are configured according to $x:E\to\{0,1\}$, i.e. $X_\ell = x_\ell$ for all $\ell \in T_v$. For $v,w\in V$, let $A_v$ be the number of layers on which $v$ is active, and $A_{vw}$ the number of layers on which both $v$ and $w$ are active. Note that $A_v$ has a binomial distribution with parameter $M$ and $q_v$, i.e.
\[
\Prob[A_v=m] = B(m;M, q_v)\triangleq \binom{M}{m} q_v^m (1- q_v)^{M-m}.
\]

For $m = 0,1,\dots, M$, define $f_{v}^{(m)}(x)$ by
\[
f_v^{(m)}(x) = \Prob[A_v=m, T_v(x)].
\] 

Let $ch(v)$ denote the set of $v$'s children. Using the conditional independence of $X_{(v,w)}$ and $T_w(x)$ for different $w\in ch(v)$, we obtain
\[
f_{v}^{(m)}(x) = \Prob[A_v=m] \prod_{w\in ch(v)} g_{vw}^{(m)}(x),
\]
where  
\[
g_{vw}^{(m)}(x) = \Prob\left[X_{(v,w)} = x_{(v,w)}, T_w(x) \bigm\vert A_v=m\right].
\]
If $v$ is a leaf node, then $ch(v) = \emptyset$ and $\prod_{w\in ch(v)} g_{vw}^{(m)}(x) = 1$ by convention.

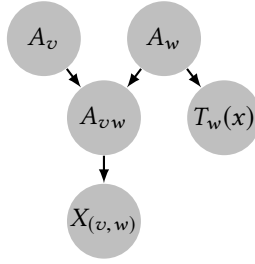
\begin{figure}
\begin{tikzpicture}[scale=.8, auto,swap]
    \node[vertex,minimum size=28pt] (v) at (0,0) {$A_v$};
    \node[vertex,minimum size=28pt] (w) at (2,0) {$A_w$};    
    \node[vertex,minimum size=28pt] (vw) at (1,-1.3) {$A_{vw}$};
    \node[vertex,minimum size=28pt] (Tw) at (3,-1.3) {$T_w(x)$};
    \node[vertex,minimum size=28pt] (x) at (1,-3) {$X_{(v,w)}$};
       
	\graph[edges=arc]{
	{(v), (w)}->(vw)->(x); (w)->(Tw);
	};

\end{tikzpicture}
\caption{Dependency structure in the calculation of $g_{vw}^{(m)}$.}
\label{fig:dependency}
\end{figure}

To compute $g_{vw}^{(m)}(x)$, we make use of the dependency structure among the variables/events shown in \prettyref{fig:dependency}. Thus
\begin{align*}
g_{vw}^{(m)}(x) &= \sum_{k=0}^M\sum_{j=0}^{\min\{m,k\}} \Prob[A_w=k, T_w(x)]  \times \Prob[A_{vw}=j\mid A_v = m, A_w = k] \\
&\hspace{21mm} \times \Prob[X_{(v,w)} = x_{(v,w)}\mid A_{vw}=j].
\end{align*}
The first factor in the summation is $f_w^{(m)}(x)$. By permutation symmetry, the second factor in the summation is given by a hypergeometric pmf, i.e.
\[
\Prob[A_{vw}=j\mid A_v = m, A_w = k] = H(j;M,m,k)\triangleq \frac{\binom{m}{j} \binom{M-m}{k-j}}{\binom{M}{k}}.
\]
For the third factor, note that given $A_{vw} = j$, the number of layers on which link $(v,w)$ is active, $W_{(v,w)}$, has a binomial distribution with parameter $j$ and $p_{(v,w)}$. Thus
\begin{align*}
\Prob[X_{(v,w)} = x_{(v,w)}\mid A_{vw}=j] = (1- x_{vw}) [1-\bar F_B(K; j, p_{(v,w)})]+ x_{vw} \bar F_B(K;j,p_{(v,w)}),
\end{align*}
where $\bar F_B(k; j, p_{(v,w)}) = \sum_{\ell\geq k} B(\ell;j,p_{(v,w)})$ is the ccdf of binomial distribution with parameter $j$ and $p_{(v,w)}$.

We can compute $\left\{f_v^{(m)}(x):m=0,1,\dots, M\right\}$ for all $v\in V$ sequentially from leaf nodes up to the root. The complexity is linear in $n=|V|$.  The probability of configuration $x$ is then
\[
\Prob[X_\ell = x_\ell, \forall \ell\in E] = \sum_{m=0}^M f_r^{(m)}(x).
\]
Since the pmfs of hypergeometric and binomial distributions can be computed in time polynomial in $M$, the above recursion can be computed in time $O(n\cdot \text{poly}(M))$. Note that this pseudopolynomial time algorithm relies on the assumption that the layers are i.i.d.. Although a similar recursion can be developed when layers are independent but have non-identical distributions, the complexity will scale as $M!$, making it feasible only for small $M$.

Figure \ref{fig:star-generalized} displays the probability that all links are active in the star shaped network represented Figure \ref{fig:star} as a function of $q$, for different number of layers.
Similarly, Figure \ref{fig:binary-tree} shows the probability that all links are active in a binary tree of depth 5  as a function of $q$, for different number of layers. These results were generated using the recursion discussed in this section with $q_v=q$ and $p_\ell=1$ for all nodes $v$ and links $l$.

\begin{figure}[h]
\centering
\begin{tikzpicture}[scale=.8, auto,swap]
    \node[vertex] (0) at (3,0) {};
    \foreach \x/\n in {1/2,2/3,3/4,4/3}
    	\foreach \y in {1,2,...,\n}
		\node[vertex] (\x\y) at (1.2*\x,-1.2*\y) {};
            
	\graph[edges=arc]{
	(0)--{(11), (21),(31),(41)}; (11)--(12); 
	(21)--(22)--(23);(31)--(32)--(33)--(34); (41)--(42)--(43);
	};

\end{tikzpicture}
\caption{A star shaped network.}
\label{fig:star}
\end{figure}
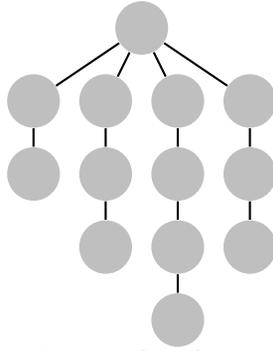

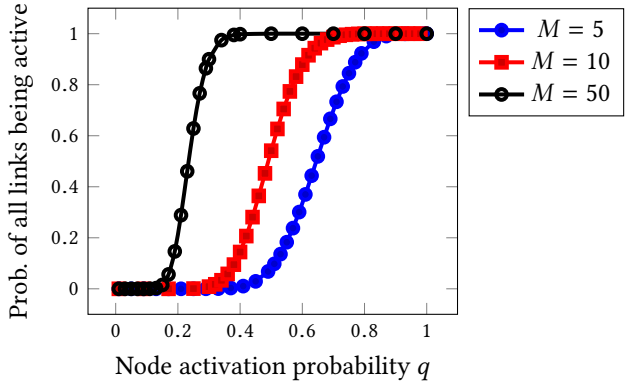
\begin{figure}[h]
\begin{center}
\begin{tikzpicture}
\begin{axis}[small, name=plot1,
   legend pos=outer north east,
   axis on top,
   ylabel near ticks,
   xlabel near ticks,
   ylabel = {Prob. of all links being active},
   xlabel = {Node activation probability $q$},
]


\addplot table [x=a, y=b]
{                            a             b
 0.0100000   0.0000000
 0.0500000   0.0000000
 0.0900000   0.0000000
 0.1300000   0.0000000
 0.1700000   0.0000001
 0.2100000   0.0000012
 0.2500000   0.0000157
 0.2900000   0.0001275
 0.3300000   0.0007263
 0.3700000   0.0031143
 0.4100000   0.0105424
 0.4500000   0.0291941
 0.4900000   0.0679615
 0.5100000   0.0978369
 0.5300000   0.1359271
 0.5500000   0.1826490
 0.5700000   0.2378584
 0.5900000   0.3007775
 0.6100000   0.3699972
 0.6300000   0.4435621
 0.6500000   0.5191290
 0.6700000   0.5941774
 0.6900000   0.6662413
 0.7100000   0.7331293
 0.7300000   0.7931021
 0.7500000   0.8449884
 0.7700000   0.8882279
 0.7900000   0.9228461
0.8300000   0.9687193
 0.8700000   0.9906075
 0.9300000   0.9994871
 0.9900000   1.0000000
  1                  1
};
\addlegendentry{$M=5$}

\addplot table [x=a, y=b]
{                            a             b
 0.0100000   0.0000000
 0.0900000   0.0000000
 0.1700000   0.0000031
 0.2500000   0.0008594
 0.3000000   0.0085048
 0.3200000   0.0176335
 0.3400000   0.0334071
 0.3600000   0.0583667
 0.3800000   0.0948042
 0.4000000   0.1441976
 0.4200000   0.2067244
 0.4400000   0.2810137
 0.4600000   0.3642283
 0.4800000   0.4524615
 0.5000000   0.5413375
 0.5200000   0.6266564
 0.5400000   0.7049329
 0.5600000   0.7737325
 0.5800000   0.8317773
 0.6000000   0.8788554
 0.6200000   0.9155984
 0.6400000   0.9432046
 0.6600000   0.9631662
 0.6800000   0.9770432
 0.7000000   0.9863007
 0.7200000   0.9922098
 0.7400000   0.9958041
 0.7600000   0.9978760
 0.7800000   0.9989995
 0.8000000   0.9995672
 0.8200000   0.9998310
 0.8400000   0.9999418
 0.8600000   0.9999829
 0.8800000   0.9999959
 0.9000000   0.9999993
 0.9200000   0.9999999
 0.9400000   1.0000000
 0.9600000   1.0000000
 0.9800000   1.0000000
};
\addlegendentry{$M=10$}

\addplot [smooth, mark=o, black] table [x=a, y=b]
{                            a             b
 0.0100000   0.0000000
 0.0300000   0.0000000
 0.0500000   0.0000000
 0.0700000   0.0000001
 0.0900000   0.0000095
 0.1100000   0.0002345
 0.1300000   0.0025553
 0.1500000   0.0151467
 0.1700000   0.0563901
 0.1900000   0.1465538
 0.2100000   0.2886876
 0.2300000   0.4605868
 0.2500000   0.6283739
 0.2700000   0.7661368
 0.2900000   0.8644988
 0.3000000   0.8997606
 0.3400000   0.9747592
 0.3800000   0.9951028
 0.4000000   0.9980426
 0.5000000   0.9999932
 0.6000000   1.0000000
 0.7000000   1.0000000
 0.8000000   1.0000000
 0.9000000   1.0000000
 1                   1
};
\addlegendentry{$M=50$}
\end{axis}

\end{tikzpicture}
\caption{Probability that all links are active in the star shaped network in  \prettyref{fig:star} as a function of the node activation probability $q$.}                      
\label{fig:star-generalized}
\end{center}
\end{figure}


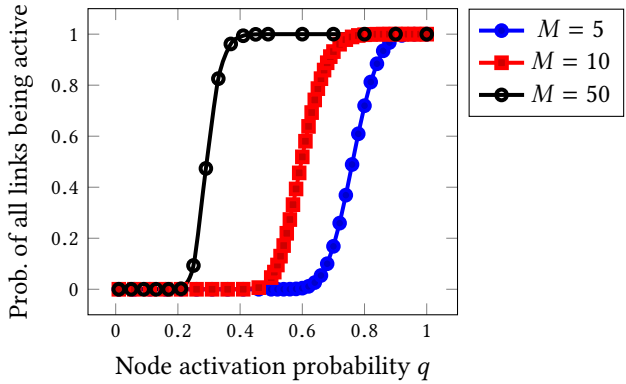
\begin{figure}[h]
\begin{center}
\begin{tikzpicture}
\begin{axis}[small, name=plot1,
   legend pos=outer north east,
   axis on top,
   ylabel near ticks,
   xlabel near ticks,
   ylabel = {Prob. of all links being active},
   xlabel = {Node activation probability $q$},
]


\addplot table [x=a, y=b]
{                            a             b
 0.0100000   0.0000000
 0.0600000   0.0000000
 0.1100000   0.0000000
 0.1600000   0.0000000
 0.2100000   0.0000000
 0.2600000   0.0000000
 0.3100000   0.0000000
 0.3600000   0.0000000
 0.4100000   0.0000000
 0.4600000   0.0000001
 0.5000000   0.0000045
 0.5200000   0.0000243
 0.5400000   0.0001119
 0.5600000   0.0004396
 0.5800000   0.0014911
 0.6000000   0.0044050
 0.6200000   0.0114340
 0.6400000   0.0262991
 0.6600000   0.0540477
 0.6800000   0.1000670
 0.7000000   0.1682948
 0.7200000   0.2592592
 0.7400000   0.3689301
 0.7600000   0.4891125
 0.7800000   0.6093553
 0.8000000   0.7195789
 0.8200000   0.8123465
 0.8400000   0.8840273
 0.8600000   0.9347085
 0.8800000   0.9672208
 0.9200000   0.9950346
 0.9600000   0.9998245
 1.0000000   1.0000000
};
\addlegendentry{$M=5$}

\addplot table [x=a, y=b]
{                            a             b
0.0100000   0.0000000
 0.0600000   0.0000000
 0.1100000   0.0000000
 0.1600000   0.0000000
 0.2100000   0.0000000
 0.2600000   0.0000000
 0.3100000   0.0000000
 0.3600000   0.0000008
 0.4100000   0.0001657
 0.4600000   0.0064071
 0.5000000   0.0458577
 0.5100000   0.0673127
 0.5200000   0.0951335
 0.5300000   0.1297897
 0.5400000   0.1713569
 0.5500000   0.2194674
 0.5600000   0.2733146
 0.5700000   0.3317130
 0.5800000   0.3932015
 0.5900000   0.4561718
 0.6000000   0.5190039
 0.6100000   0.5801875
 0.6200000   0.6384184
 0.6300000   0.6926613
 0.6400000   0.7421782
 0.6500000   0.7865269
 0.6600000   0.8255359
 0.6700000   0.8592629
 0.6800000   0.8879470
 0.6900000   0.9119575
 0.7000000   0.9317475
 0.7200000   0.9606600
 0.7400000   0.9786281
 0.7600000   0.9891237
 0.7800000   0.9948606
 0.8000000   0.9977724
 0.8200000   0.9991291
 0.8400000   0.9996999
 0.8600000   0.9999118
 0.8800000   0.9999790
 0.9000000   0.9999962
 0.9200000   0.9999995
 0.9400000   1.0000000
 0.9600000   1.0000000
 0.9800000   1.0000000
 1.0000000   1.0000000
};
\addlegendentry{$M=10$}

\addplot [smooth, mark=o, black] table [x=a, y=b]
 {                            a             b
 0.0100000   0.0000000
 0.0500000   0.0000000
 0.0900000   0.0000000
 0.1300000   0.0000000
 0.1700000   0.0000005
 0.2100000   0.0018159
 0.2500000   0.0931502
 0.2900000   0.4735923
 0.3300000   0.8255762
 0.3700000   0.9616080
 0.4100000   0.9937832
 0.4500000   0.9992442
 0.4900000   0.9999324
 0.6               0.99999
 0.7               0.99999
 0.8               0.99999
 0.9               0.99999
 1                  1
};
\addlegendentry{$M=50$}
\end{axis}
\end{tikzpicture}

\caption{Probability that all links are active in a binary tree of height
  $5$ as a function of the node activation probability $q$.}                      
\label{fig:binary-tree}
\end{center}
\end{figure}


\section{Asymptotic Results}\label{sec:asymp}

In this section, we derive the link configuration distribution in the limit as the number of layers, $M$, goes to infinity and the probabilities $\{p_\ell\}_{\ell\in E}$ and $\{q_i\}_{i\in V}$ decrease as functions of $M$. This case is especially relevant when the multilayer network arises from snapshots of a single network as in duty-cycled wireless sensor networks, where $M$ can easily become very large. Even for moderate $M$, the asymptotics may already provide good approximations as shown in \cite{Guh16}.
The main results are presented in \prettyref{subsec:asymptotic-main} with proofs given in \prettyref{subsec:asymptotic-proof}. The results are extended in \prettyref{subsec:asymptotic-non-identical} beyond the model of \prettyref{sec:model} to the case of non-identical layers.

\subsection{Main Results}\label{subsec:asymptotic-main}

Consider link $\ell=(i,j)\in E$. 
Note that its multiplicity $W_\ell$, the number of layers within which link $\ell$ is active (given in \prettyref{eq:W}) has a binomial distribution, 
\[
\Prob[W_\ell=w] = \binom{M}{w} (p_\ell q_i q_j)^w (1-p_\ell q_i q_j)^{M-w}, \quad w=1,\dotsc , M.
\]
If $M p_\ell q_i q_j$ has a finite positive limit as $M\to\infty$, a classical result shows that the above binomial distribution converges to a Poisson distribution. If we allow the natural interpretation of a Poisson distribution with rate parameter $0$ or $\infty$ as a point mass at 0 or $\infty$, then we have the following,

\begin{theorem}\label{thm:Poisson}
Suppose 
\begin{equation*}
\lim_{M\to\infty} M p_\ell q_i q_j = \lambda_\ell\in [0,\infty]
\end{equation*}
exists for link $\ell=(i,j)$. Then, as $M\to\infty$, the distribution of the multiplicity $W_\ell$ of $\ell$ converges to a Poisson distribution with parameter $\lambda_\ell$, i.e.
\begin{equation}\label{eq:Poisson}
\lim_{M\to\infty} \Prob[W_\ell = w] = \pi(w; \lambda_\ell)  \triangleq \frac{\lambda_\ell^w}{w!}e^{-\lambda_\ell} .
\end{equation}
\end{theorem}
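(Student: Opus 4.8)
The plan is to recognize that $W_\ell$ is simply a Binomial random variable and to invoke a robust version of the classical Poisson limit theorem, treating the three regimes $\lambda_\ell\in(0,\infty)$, $\lambda_\ell=0$, and $\lambda_\ell=\infty$ separately. Write $r_M=p_\ell q_i q_j$ for the single-layer success probability and $\lambda_M=M r_M$, so that $W_\ell\sim\mathrm{Binomial}(M,r_M)$ and $\lambda_M\to\lambda_\ell$ by hypothesis, with pmf $\Prob[W_\ell=w]=\binom{M}{w}r_M^w(1-r_M)^{M-w}$.

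First, for $\lambda_\ell\in(0,\infty)$: here $r_M=\lambda_M/M\to0$, and I would carry out the standard factorization
\[
\Prob[W_\ell=w]=\frac{\lambda_M^w}{w!}\cdot\frac{M(M-1)\cdots(M-w+1)}{M^w}\cdot\Bigl(1-\frac{\lambda_M}{M}\Bigr)^{M-w},
\]
then let $M\to\infty$ termwise: the falling-factorial ratio tends to $1$, $\lambda_M^w\to\lambda_\ell^w$, and $(1-\lambda_M/M)^{M-w}\to e^{-\lambda_\ell}$, which yields $\pi(w;\lambda_\ell)$.

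Second, the two boundary cases, where the only genuine care is needed. For $\lambda_\ell=0$ I would bound, for fixed $w\ge1$, $\Prob[W_\ell=w]\le\binom{M}{w}r_M^w\le\lambda_M^w/w!\to0$, while $\Prob[W_\ell=0]=(1-r_M)^M\to1$ since $M\log(1-r_M)\sim-M r_M=-\lambda_M\to0$; this is exactly the point mass at $0$. For $\lambda_\ell=\infty$ (where $r_M$ need not vanish) I would use $\log(1-x)\le-x$ to get, for $M$ large enough that $w/M\le1/2$,
\[
\Prob[W_\ell=w]\le\frac{\lambda_M^w}{w!}\Bigl(1-\frac{\lambda_M}{M}\Bigr)^{M-w}\le\frac{\lambda_M^w}{w!}e^{-\lambda_M(1-w/M)}\le\frac{\lambda_M^w}{w!}e^{-\lambda_M/2}\to0,
\]
so the mass escapes every fixed finite level and the limit is the point mass at $\infty$; the degenerate subcase $r_M=1$ is immediate, since then $\Prob[W_\ell=w]=0$ for all $w<M$.

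I expect the main obstacle to be purely bookkeeping at the boundaries: making the ``$\to$ point mass at $0$ or $\infty$'' interpretation precise and uniform in $w$, and observing that in the $\lambda_\ell=\infty$ regime one cannot appeal to the usual $r_M\to0$ hypothesis of the textbook Poisson limit theorem, so the exponential bound above is what is needed to kill the growing factor $\lambda_M^w$. The core positive case is entirely classical and should take only a few lines.
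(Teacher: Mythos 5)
Your proposal is correct and takes essentially the same approach as the paper: the paper's entire argument is to observe that $W_\ell$ is $\mathrm{Binomial}(M,\,p_\ell q_i q_j)$, invoke the classical binomial-to-Poisson limit theorem when $Mp_\ell q_i q_j$ has a finite positive limit, and interpret the cases $\lambda_\ell=0$ and $\lambda_\ell=\infty$ as point masses. Your write-up simply supplies the details the paper delegates to the ``classical result'' and the ``natural interpretation,'' including the exponential bound that handles the $\lambda_\ell=\infty$ regime where $r_M$ need not vanish.
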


The joint distribution of the $W_\ell$'s may have a complicated correlation structure. However, when  $p_\ell$ and $q_i$ scale with $M$ appropriately,  the $W_\ell$'s become asymptotically independent. 
We consider the case that $p_\ell$ and $q_i$ scale with $M$ as follows,
\begin{align}
p_\ell &\sim c_\ell M^{-\alpha_\ell}, \quad \text{for } \ell\in E, \label{eq:p}\\
q_i &\sim d_i M^{-\beta_i}, \quad \text{for } i\in V,\label{eq:q}
\end{align}
where $c_\ell, d_i > 0$ and $\alpha_\ell,\beta_i \geq 0$. For a link $\ell = (i,j)$, the parameter $\lambda_\ell$ defined in \prettyref{eq:Poisson} is then given by
\begin{equation}\label{eq:lambda}
\lambda_\ell = \begin{cases}
0, & \text{if}\quad\alpha_\ell + \beta_i + \beta_j > 1;\\
c_\ell d_i d_j, & \text{if}\quad \alpha_\ell + \beta_i + \beta_j = 1;\\
+\infty, & \text{if}\quad \alpha_\ell + \beta_i + \beta_j < 1.\\
\end{cases}
\end{equation}

For node $k$, let 
\[
\Nbr_k = \{i\in V: (k,i) \in E, \alpha_{(k,i)} + \beta_k + \beta_i = 1\}.
\]
We also assume the following condition,
\begin{equation}\label{eq:regular}
\beta_k<1, \text{ if } \; |\Nbr_k| \geq 2.
\end{equation}
 
\begin{theorem}\label{thm:asymptotic-independence}
Under the conditions \prettyref{eq:p}, \prettyref{eq:q} and \prettyref{eq:regular}, the collection of random variables $\{W_\ell:\ell\in E\}$ become asymptotically independent as $M\to\infty$, i.e.
for any $\{w_\ell:\ell\in E\}\in \N^E$, 
\begin{equation}\label{eq:asymptotic-independence}
\lim_{M\to\infty} \Prob[W_\ell = w_\ell, \forall \ell\in E] = \prod_{\ell\in E} \pi(w_\ell;\lambda_\ell)=\prod_{\ell\in E} e^{-\lambda_\ell} \frac{\lambda_\ell^{w_\ell}}{w_\ell!},
\end{equation}
where $\lambda_\ell$ is given by \prettyref{eq:lambda}.
\end{theorem}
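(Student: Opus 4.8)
The plan is to pass to the joint probability generating function (pgf) $\phi_M(s)=\E\big[\prod_{\ell\in E}s_\ell^{W_\ell}\big]$, $s=(s_\ell)_{\ell\in E}$, and to show that $\phi_M(s)\to\prod_{\ell\in E}e^{\lambda_\ell(s_\ell-1)}$, the pgf of a family of independent Poisson variables with means $\lambda_\ell$; the continuity theorem for pgfs then yields \eqref{eq:asymptotic-independence}. First I would dispose of the degenerate case: if $\lambda_{\ell_0}=\infty$ for some $\ell_0$, the right-hand side of \eqref{eq:asymptotic-independence} is $0$, while the left-hand side is at most $\Prob[W_{\ell_0}=w_{\ell_0}]\to\pi(w_{\ell_0};\infty)=0$ by \prettyref{thm:Poisson}. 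Hence I may assume every $\lambda_\ell$ is finite, i.e. $\alpha_\ell+\beta_i+\beta_j\ge 1$ for each $\ell=(i,j)$.

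Since the $M$ layers are i.i.d., $\phi_M(s)=\psi_M(s)^M$ with $\psi_M(s)=\E\big[\prod_{\ell}s_\ell^{W_{1,\ell}}\big]$. Each $W_{1,\ell}=Y_{1,\ell}\prod_{v\in\ell}Z_{1,v}\in\{0,1\}$, so $s_\ell^{W_{1,\ell}}=1+(s_\ell-1)W_{1,\ell}$; expanding the product over $\ell$ and using $Z_{1,v}^2=Z_{1,v}$ gives, with $V(S):=\bigcup_{\ell\in S}\ell$,
\[
\psi_M(s)=1+\sum_{\ell=(i,j)}(s_\ell-1)\,p_\ell q_iq_j+\sum_{\substack{S\subseteq E\\ |S|\ge 2}}\Big(\prod_{\ell\in S}(s_\ell-1)\Big)\Big(\prod_{\ell\in S}p_\ell\Big)\Big(\prod_{v\in V(S)}q_v\Big).
\]
Writing $\psi_M(s)=1+a_M(s)$, the goal becomes to understand $M\,a_M(s)$, since $\phi_M=(1+a_M)^M=\exp\!\big(M\ln(1+a_M)\big)$ once $a_M\to0$.

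The crux — and the only place where \eqref{eq:regular} is used — is to show that after multiplication by $M$ every $|S|\ge2$ term vanishes, so that $M a_M(s)$ is governed by the $|S|=1$ terms alone. Under \eqref{eq:p}--\eqref{eq:q}, $M\prod_{\ell\in S}p_\ell\prod_{v\in V(S)}q_v$ is of order $M^{\,1-\theta(S)}$ where $\theta(S):=\sum_{\ell\in S}\alpha_\ell+\sum_{v\in V(S)}\beta_v$, so it suffices to prove $\theta(S)>1$ whenever $|S|\ge2$. I would reduce this to two edges: since all exponents are nonnegative, $\theta$ is nondecreasing under adding edges and vertices, hence $\theta(S)\ge\theta(\{\ell_1,\ell_2\})$ for any $\ell_1,\ell_2\in S$. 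If $\ell_1,\ell_2$ are vertex-disjoint, $\theta(\{\ell_1,\ell_2\})\ge1+1=2$. If they meet at a vertex $k$, say $\ell_1=(k,i),\ell_2=(k,j)$, then $\theta(\{\ell_1,\ell_2\})=\alpha_{\ell_1}+\alpha_{\ell_2}+\beta_k+\beta_i+\beta_j$. This is $\ge2-\beta_k>1$ when $\beta_k<1$, and is $\ge\beta_k>1$ when $\beta_k>1$; in the boundary case $\beta_k=1$ the value is $\ge1$ with equality only if $\alpha_{\ell_1}=\alpha_{\ell_2}=\beta_i=\beta_j=0$, which makes both $\ell_1,\ell_2$ critical so that $|\Nbr_k|\ge2$ — forbidden by \eqref{eq:regular}. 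Thus $\theta(S)>1$ in all cases.

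With this lemma in hand the rest is routine: $a_M\to0$ and $M a_M(s)\to\sum_{\ell}\lambda_\ell(s_\ell-1)=:L(s)$ (the $|S|=1$ sum, since $Mp_\ell q_iq_j\to\lambda_\ell$), so $M a_M^2=(M a_M)\,a_M\to0$ and $M\ln(1+a_M)=M a_M-\tfrac12 M a_M^2+o(1)\to L(s)$. Hence $\phi_M(s)\to e^{L(s)}=\prod_{\ell}e^{\lambda_\ell(s_\ell-1)}$ for $s$ near $\mathbf{1}$, which is exactly the pgf of independent Poisson$(\lambda_\ell)$ variables, and the continuity theorem for pgfs gives the pointwise convergence of the mass function in \eqref{eq:asymptotic-independence}. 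I expect the combinatorial step $\theta(S)>1$ to be the main obstacle; everything else is bookkeeping. It is instructive that \eqref{eq:regular} enters in exactly one spot, ruling out two critical edges sharing a vertex with $\beta_k=1$ — the unique configuration in which a pairwise correlation between link multiplicities could persist as $M\to\infty$.
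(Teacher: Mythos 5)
Your proof is correct, but it takes a genuinely different route from the paper's. The paper argues directly in probability space: after using \prettyref{lemma:whp} to reduce to the case $\lambda_\ell\in(0,\infty)$ for all $\ell$, it bounds the probability that two links co-exist on any single layer by $CM^{-1-\gamma}$ (inequality \prettyref{eq:prob-coexist} --- the one place \prettyref{eq:regular} is invoked), deduces that with probability tending to one every layer carries at most one link, and on that event writes the joint law of $\{W_\ell\}$ as an exact multinomial \prettyref{eq:multinomial}, which Poissonizes in the limit. You instead pass to the joint pgf, factor it over the i.i.d.\ layers, expand the single-layer pgf into correlation terms indexed by edge subsets $S$, and kill every $|S|\geq 2$ term with the exponent count $\theta(S)>1$, finishing with the continuity theorem. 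The two key estimates are morally identical --- your pair case ($\theta(\{\ell_1,\ell_2\})>1$ for edges sharing a vertex, with \prettyref{eq:regular} ruling out the boundary case $\beta_k=1$) is exactly the content of \prettyref{eq:prob-coexist}, and your reduction of general $S$ to pairs mirrors the paper's union bound in \prettyref{eq:delta} and Bonferroni step in \prettyref{eq:delta0} --- but the surrounding machinery differs. What each buys: the paper's argument yields the extra structural fact that, conditionally on ``at most one link per layer,'' the multiplicities are exactly multinomial, and it avoids transform theory altogether; your argument needs no separate reduction for $\lambda_\ell=0$ (only $\lambda_\ell=\infty$ is split off, correctly, via \prettyref{thm:Poisson}) and, more importantly, it is essentially the method the paper itself must switch to for the non-identical-layer extension in \prettyref{subsec:asymptotic-non-identical} (Laplace transforms plus a lemma of Kallenberg), so your proof generalizes to \prettyref{thm:asymptotic-independence-nonidentical} almost verbatim, whereas the multinomial computation relies on the layers being exchangeable. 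Two cosmetic remarks: your restriction to $s$ ``near $\mathbf{1}$'' is unnecessary, since your estimates hold for every fixed $s\in(0,1]^{|E|}$ (and is anyway harmless, as pgf convergence near $1$ determines the limit law); and in the shared-vertex case the line ``$\geq\beta_k>1$ when $\beta_k>1$'' silently covers a regime where both $\lambda_{\ell_1}=\lambda_{\ell_2}=0$, which is fine but worth flagging as such.
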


\begin{proof}[Proof Idea]
In the large $M$ limit, each layer essentially has at most one link. Thus the configuration roughly follows a multinomial distribution, which in the limit becomes a product of Poisson distributions. The details are given in \prettyref{subsec:asymptotic-proof}.
\end{proof}

Note that condition \prettyref{eq:regular} is necessary. The example below shows that asymptotic independence does not hold when condition \prettyref{eq:regular} fails. 

\begin{example}
Consider the line network in \prettyref{fig:line-network} with three nodes. The probability that no links exist is given by
\begin{align*}
\Prob[W_{(1,2)} = W_{(2,3)} = 0] = \left[1- q_2 + q_2 (1-q_1 p_{(1,2)})(1-q_3 p_{(2,3)})\right]^M.
\end{align*}
If  $\beta_2 = 1$ and $\alpha_{(1,2)} = \beta_1  = \alpha_{(2,3)} = \beta_3 = 0$, then
\begin{align*}
\lim_{M\to\infty} \Prob[W_{(1,2)} = W_{(2,3)} = 0] &= e^{-d_2 + d_2(1-q_1 p_{(1,2)})(1-q_3 p_{(2,3)})}\\
&=\pi(0, \lambda_{(1,2)})\pi(0,\lambda_{(2,3)}) e^{\lambda_{(1,2)}\lambda_{(2,3)}/d_2}, 
\end{align*}
which shows that $W_{(1,2)}$ and $W_{(2,3)}$ are not asymptotically independent. 
\end{example}

It follows from \prettyref{thm:asymptotic-independence} and the definition of $X_\ell$ that $\{X_\ell:\ell\in E\}$ is a set of asymptotically independent Bernoulli random variables with limiting marginal distribution $\lim_{M\to\infty}\Prob[X_\ell=1] = \sum_{w\geq K} \pi(w;\lambda_\ell)$. In particular, for $K=1$, the following corollary yields Theorem 1 of \cite{Guh16} if $G$ is a tree and the conjecture therein if $G$ is a general graph. 

\begin{corollary}
Suppose $p_\ell = 1$ for all $\ell\in E$,  $q_i = dM^{-1/2}$ for all $i\in V$, and $K=1$. Then
\begin{equation} 
\lim_{M\to\infty} \Prob[X_\ell = x_\ell, \forall \ell\in E] = e^{-d^2 (|E|-|\cE|)}(1-e^{-d^2})^{|\cE|}.
\end{equation}
In the limit $M\to\infty$, the merged network $\cG$ has a giant component if $d$ exceeds the threshold $\sqrt{-\log(1-p_c)}$, where $p_c$ is the bond-percolation threshold of $G$.
\end{corollary}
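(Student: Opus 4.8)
The plan is to read the Corollary as the specialization of \prettyref{thm:asymptotic-independence} to the stated parameters, and then to identify the limiting law of $\cG$ with ordinary Bernoulli bond percolation. First I would match the hypotheses to the scaling forms \prettyref{eq:p}--\prettyref{eq:q}: taking $p_\ell = 1$ corresponds to $\alpha_\ell = 0$, $c_\ell = 1$, while $q_i = dM^{-1/2}$ corresponds to $\beta_i = 1/2$, $d_i = d$. For every link $\ell = (i,j)$ one then has $\alpha_\ell + \beta_i + \beta_j = 1$, so by \prettyref{eq:lambda} every link is in the critical regime with $\lambda_\ell = c_\ell d_i d_j = d^2$. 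I would also check that condition \prettyref{eq:regular} holds: since $\beta_k = 1/2 < 1$ for every node $k$, the requirement that $\beta_k < 1$ whenever $|\Nbr_k| \geq 2$ is satisfied automatically. Hence \prettyref{thm:asymptotic-independence} applies.

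Next, I would invoke the remark immediately preceding the Corollary --- that $\{X_\ell : \ell \in E\}$ are asymptotically independent Bernoulli variables with $\lim_{M\to\infty}\Prob[X_\ell = 1] = \sum_{w \geq K}\pi(w;\lambda_\ell)$. Substituting $K = 1$ and $\lambda_\ell = d^2$ gives the limiting marginal
\[
\lim_{M\to\infty}\Prob[X_\ell = 1] = \sum_{w\geq 1}\pi(w; d^2) = 1 - e^{-d^2},
\]
and hence $\lim_{M\to\infty}\Prob[X_\ell = 0] = e^{-d^2}$. By asymptotic independence the joint probability factorizes over the links; collecting the $|\cE|$ factors with $x_\ell = 1$ and the $|E| - |\cE|$ factors with $x_\ell = 0$ yields $(1-e^{-d^2})^{|\cE|}(e^{-d^2})^{|E|-|\cE|}$, which is the claimed expression.

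Finally, for the percolation statement I would observe that the limiting law just derived is precisely that of independent bond percolation on $G$ in which each edge is retained independently with probability $p := 1 - e^{-d^2}$. Whether the resulting random subgraph contains a giant component is therefore governed by comparing $p$ with the bond-percolation threshold $p_c$ of $G$: a giant component appears exactly when $p > p_c$, i.e. $1 - e^{-d^2} > p_c$. Solving this inequality gives $e^{-d^2} < 1 - p_c$, hence $d^2 > -\log(1-p_c)$, i.e. $d > \sqrt{-\log(1-p_c)}$ (the square root being real since $1 - p_c < 1$), as claimed.

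The main obstacle I anticipate is making rigorous the passage from the joint Poisson limit \prettyref{eq:asymptotic-independence} for the integer-valued multiplicities $\{W_\ell\}$ to the convergence of the configuration probabilities, since $\{X_\ell = 1\}$ is the tail event $\{W_\ell \geq 1\}$ and a naive argument would interchange the limit with an infinite sum. I would resolve this by writing $\{X_\ell = 0\} = \{W_\ell = 0\}$ and $\{X_\ell = 1\} = \{W_\ell \geq 1\}$, and applying finite inclusion--exclusion on the ``$\geq 1$'' constraints, so that every resulting term is a probability of the finite-dimensional continuity event $\bigcap_{\ell}\{W_\ell = 0\}$ on which the limit is immediate; the resulting finite sum
\[
\sum_{S\subseteq \cE}(-1)^{|S|}e^{-d^2(|E|-|\cE|+|S|)} = e^{-d^2(|E|-|\cE|)}(1-e^{-d^2})^{|\cE|}
\]
reproduces the formula. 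The percolation half is then a definitional translation, the only mild caveat being that ``giant component'' and $p_c$ presuppose an infinite (or growing) underlying graph $G$ rather than a single finite one.
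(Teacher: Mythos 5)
Your specialization of Theorem \ref{thm:asymptotic-independence} is exactly how the paper intends this corollary to be read: the paper offers no separate proof, relying on the remark that $\{X_\ell\}$ are asymptotically independent Bernoulli variables with limiting marginal $\sum_{w\geq K}\pi(w;\lambda_\ell)$, followed by the bond-percolation translation. Your parameter matching ($\alpha_\ell=0$, $\beta_i=1/2$, hence $\lambda_\ell=d^2$ with \eqref{eq:regular} automatic), the factorization $(1-e^{-d^2})^{|\cE|}(e^{-d^2})^{|E|-|\cE|}$, and the threshold computation $1-e^{-d^2}>p_c$ iff $d>\sqrt{-\log(1-p_c)}$ all agree with the intended argument, and your closing caveat about ``giant component'' presupposing an infinite or growing $G$ is apt, since the paper leaves that implicit too.

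The one flaw is in your fix for passing from \eqref{eq:asymptotic-independence} to configuration probabilities. After inclusion--exclusion, your terms are $\Prob\left[W_\ell=0,\ \forall\,\ell\in(E\setminus\cE)\cup S\right]$ with $S\subseteq\cE$; for $S\neq\cE$ these events leave the coordinates in $\cE\setminus S$ unconstrained, so they are cylinder events whose probabilities are again infinite sums of the point probabilities controlled by \eqref{eq:asymptotic-independence} --- precisely the interchange-of-limits issue you set out to avoid. As written, calling their limits ``immediate'' is circular. The standard repair also makes the inclusion--exclusion unnecessary: on the countable space $\N^E$, pointwise convergence of pmfs to a limit that is itself a proper pmf (here $\prod_{\ell}\pi(\cdot;\lambda_\ell)$, which sums to $1$) implies convergence in total variation by Scheff\'e's lemma, hence $\Prob[W\in B]$ converges to the limiting probability for \emph{every} $B\subseteq\N^E$ --- in particular for $B=\bigcap_{\ell\notin\cE}\{w_\ell=0\}\cap\bigcap_{\ell\in\cE}\{w_\ell\geq 1\}$ directly. (Equivalently: in the discrete topology every subset is a continuity set, so once weak convergence is in hand no decomposition into point events is needed.) With that one line added, your proof is complete and matches the paper's intended route.
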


Another consequence of \prettyref{thm:asymptotic-independence} is the following trichotomy. 

\begin{corollary}
Suppose $p_\ell = p\sim c M^{-\alpha}$ for all $\ell\in E$,  $q_i = q \sim dM^{-\beta}$ for all $i\in V$. Then in the limit $M\to\infty$, the network $\cG$ is
\begin{enumerate}
\item an empty network with no link, if $\alpha+2\beta>1$;
\item the entire network $G$, if $\alpha+2\beta<1$;
\item an Erd\"os-R\'enyi-like sub-network of $G$ where a link exists with probability $\sum_{w\geq K} \pi(w;cd^2)$, if $\alpha+2\beta=1$. 
\end{enumerate}
\end{corollary}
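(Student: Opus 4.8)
The plan is to read the result off directly from \prettyref{thm:asymptotic-independence}, since the hypotheses here are exactly the special case of \prettyref{eq:p} and \prettyref{eq:q} in which the exponents and constants do not depend on the particular link or node: one takes $\alpha_\ell = \alpha$ and $c_\ell = c$ for every $\ell\in E$, and $\beta_i = \beta$ and $d_i = d$ for every $i\in V$. First I would substitute these into \prettyref{eq:lambda}. For a link $\ell=(i,j)$ the relevant exponent sum is $\alpha_\ell + \beta_i + \beta_j = \alpha + 2\beta$, which is the same for all links, so every $\lambda_\ell$ equals a common value $\lambda$ given by the trichotomy $\lambda = 0$ when $\alpha+2\beta>1$, $\lambda = cd^2$ when $\alpha+2\beta=1$, and $\lambda = +\infty$ when $\alpha+2\beta<1$.

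Next I would verify the regularity condition \prettyref{eq:regular}, which is the only hypothesis of \prettyref{thm:asymptotic-independence} requiring any attention. In the first and third regimes ($\alpha+2\beta\neq 1$) the set $\Nbr_k$ is empty for every node $k$, because no link satisfies $\alpha_{(k,i)}+\beta_k+\beta_i=1$; hence $|\Nbr_k|=0<2$ and \prettyref{eq:regular} holds vacuously. In the boundary case $\alpha+2\beta=1$, since $\alpha\geq 0$ we have $\beta\leq 1/2<1$, so the requirement $\beta_k<1$ is automatically satisfied regardless of $|\Nbr_k|$. Thus in all three regimes \prettyref{thm:asymptotic-independence} applies, and the $\{W_\ell:\ell\in E\}$ are asymptotically independent with each $W_\ell$ converging in distribution to a Poisson law of parameter $\lambda$.

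It then remains to translate the statement about the multiplicities $W_\ell$ into one about the link states $X_\ell = \ind{W_\ell \geq K}$ and the merged network $\cG$. Since each $X_\ell$ is a fixed measurable function of $W_\ell$, the asymptotic independence of the $W_\ell$'s carries over to the $X_\ell$'s, and $\lim_{M\to\infty}\Prob[X_\ell=1] = \lim_{M\to\infty}\Prob[W_\ell\geq K] = \sum_{w\geq K}\pi(w;\lambda)$. In the boundary case this gives a common link-present probability $\sum_{w\geq K}\pi(w;cd^2)$, and the asymptotic independence then makes $\cG$ an Erd\"os--R\'enyi-like subgraph of $G$ with that link probability, establishing item (3).

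Finally I would dispose of the two degenerate regimes using the stated convention that a Poisson law of parameter $0$ (resp. $\infty$) is a point mass at $0$ (resp. $\infty$). When $\lambda=0$, $W_\ell\to 0$ in distribution, so $\Prob[X_\ell=1]=\Prob[W_\ell\geq K]\to 0$ for every $\ell$ (recall $K\geq 1$); thus every link is asymptotically inactive and $\cG$ is the empty network, giving item (1). When $\lambda=+\infty$, $W_\ell\to\infty$, so $\Prob[W_\ell\geq K]\to 1$ and every link is asymptotically active, giving the entire graph $G$ and item (2). The only care needed is in justifying these two limiting interpretations of the Poisson distribution and in checking \prettyref{eq:regular} at the boundary; both are light, so I do not anticipate a genuine obstacle, as the corollary is essentially a specialization and repackaging of \prettyref{thm:asymptotic-independence}.
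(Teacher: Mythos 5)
Your proof is correct and follows essentially the same route as the paper, which presents this corollary as a direct specialization of \prettyref{thm:asymptotic-independence} together with the preceding remark that the $X_\ell = \ind{W_\ell\geq K}$ are asymptotically independent Bernoulli variables with limiting marginals $\sum_{w\geq K}\pi(w;\lambda_\ell)$. Your explicit verification of condition \prettyref{eq:regular} (vacuous when $\alpha+2\beta\neq 1$, and $\beta\leq 1/2<1$ on the boundary) is a detail the paper leaves implicit, and it is handled correctly.
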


As an easy application of the results obtained in this section, consider the line network in Figure \ref{fig:line-network}. When $q=1/\sqrt{M}$
we know that links become independent as $M\to\infty$, with $p_a :=1-e^{-1}$ the (asymptotic) probability that a link
is active. The expected size of the cluster node $1$ belongs to,  including this node, is then given by 
$\bar C_{a,n}=(1- p_a^{n+1})/(1-p_a) $ and is plotted in Fig. \ref{fig-line-limit}  for $n=\{1,\ldots,20\}$.
We observe that $C_{a,n}$  converges fast w.r.t. $n$, the number of links. We have also plotted in this figure the mapping
$n\to \bar C_n(1)$ for $q=1/\sqrt{M}$ with $M=50$, which shows that making the assumption that links are independent when $q=1/\sqrt{50}$  with $M=50$ 
yields a relative error of less then $10\%$  across all values of $n\in\{1,\ldots,20\}$. 
Similarly, the expected number of active links  when  $q=1/\sqrt{M}$ and $M\to\infty$, given by $\bar L_{n,a}=n p_a$, is plotted in Fig. \ref{fig-line-limit-2} as a function of $n$. 
We have also plotted in this figure the mapping $n\to \bar L_n$ (see Section \ref{ssec:line}) for $q=1/\sqrt{M}$ with $M=40$  and $M=50$.  For $M=40$ the relative error made by approximating $ L_n$ by $\bar L_{n,a}$ does not exceed $14.1\%$ across of all values of $n\in\{1,\ldots,20\}$; it does not exceed $0.59\%$ when $M=50$.
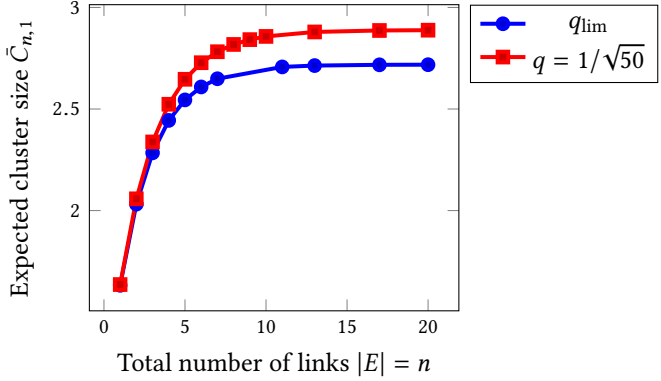
\begin{figure}[h]
\begin{center}
\begin{tikzpicture}
\begin{axis}[small, name=plot2,
   legend pos=outer north east, 
   axis on top,
   ylabel near ticks,
   xlabel near ticks,
   ylabel = {Expected cluster size $\bar C_{n,1}$},
   xlabel = {Total number of links $|E| = n$} 
]

\addplot  table [x=a, y=b] 
{
                           a      b
                         1  1.632120559
                         2  2.031696960
                         3  2.284277418
                         4  2.443938717
                         5  2.544863908
                         6  2.608660795
                         7  2.648988120
                        11  2.707218305
                        13 2.713861105
                       17 2.717576010
                        20 2.718169136
};
\addlegendentry{$q_{{\lim}}$}

\addplot  table [x=a,  y=b] 
{
                          a          b
                         1  1.635830320
                         2  2.057959327
                         3  2.337816818
                         4  2.523363723
                         5  2.646381953
                        6  2.727943470
                         7  2.782019038
                         8  2.817871329
                         9  2.841641520
                        10  2.857401240
                        13 2.879370542
                        17 2.886660730
                        20 2.888069375
};
\addlegendentry{$q=1/\sqrt{50}$}

\end{axis}
\end{tikzpicture}
\caption{Expected size of cluster containing node $1$ in the line network of \prettyref{fig:line-network} as a function of the total number of links $n$, for  $q=1/\sqrt{M}$ and $M\to\infty$ (referred to as $q_{\lim}$) and $q=1/\sqrt{50}$.}                   
\label{fig-line-limit}
\end{center}
\end{figure}
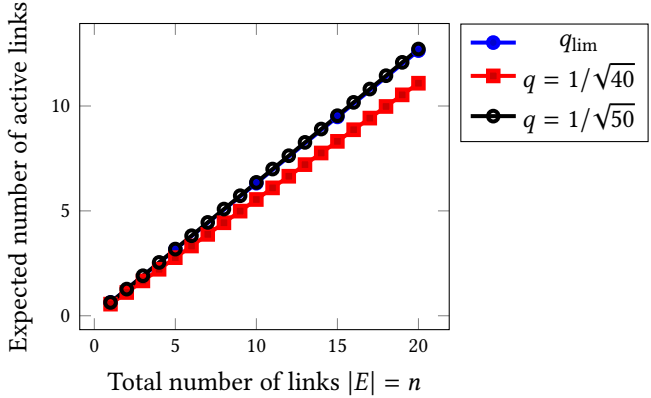
\begin{figure}[h]
\begin{center}
\begin{tikzpicture}
\begin{axis}[small, name=plot2,
   legend pos=outer north east, 
   axis on top,
   ylabel near ticks,
   xlabel near ticks,
   ylabel = {Expected number of active links},
   xlabel = {Total number of links $|E| = n$},
]

\addplot  table [x=a, y=b] 
{
                         a       b
		        1   0.6321205588
                         5    3.160602794
                      10   6.321205588
                      15  9.481808382
                        20  12.64241118
};
\addlegendentry{$q_{{\lim}}$}

\addplot  table [x=a,  y=b] 
{
                         a          b
                         1      0.5542995842
                         2      1.108599175
                         3      1.662898761
                         4      2.217198350
                         5      2.771497940
                         6      3.325797528
                         7      3.880097113
                         8      4.434396706
                         9      4.988696291
                        10      5.542995880
                        11      6.097295467
                        12      6.651595056
                        13      7.205894643
                        14      7.760194229
                        15      8.314493818
                        16      8.868793406
                        17      9.423092993
                        18      9.977392581
                        19      10.53169217
                        20      11.08599174
};
\addlegendentry{$q=1/\sqrt{40}$}

\addplot  [smooth, mark=o, black]  table [x=a,  y=b] 
{
                                a          b
                        1        0.6358288903
                         2        1.271658264
                         3        1.907487637
                         4        2.543317011
                         5        3.179146386
                         6        3.814975758
                         7        4.450805133
                         8        5.086634506
                         9        5.722463879
                        10        6.358293255
                        11        6.994122628
                        12        7.629952001
                        13        8.265781376
                        14        8.901610748
                        15        9.537440122
                        16        10.17326948
                        17        10.80909887
                        18        11.44492824
                        19        12.08075760
                        20        12.71658696
};
\addlegendentry{$q=1/\sqrt{50}$}

\end{axis}
\end{tikzpicture}
\caption{Expected number of active links in the line network of  \prettyref{fig:line-network} as a function of the total number of links $n$, for  $q=1/\sqrt{M}$ and $M\to\infty$ (referred to as $q_{{\lim}}$),  
$q=1/\sqrt{40}$,  and $q=1/\sqrt{50}$.}                   
\label{fig-line-limit-2}
\end{center}
\end{figure}
\subsection{Proof of \prettyref{thm:asymptotic-independence}}\label{subsec:asymptotic-proof}

We will use the following simple result, which holds for any probability measure.

\begin{lemma}\label{lemma:whp}
For any sequence of events $\{A_M\}$ and $\{B_M\}$ such that $\Prob[B_M] \to 1$ as $M\to\infty$, we have
\[
 \Prob[A_M] - \Prob[A_M\cap B_M] \to 0, \quad \text{as } M\to \infty.
\]
\end{lemma}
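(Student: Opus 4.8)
My plan is to reduce the claimed difference to the probability of a single ``bad'' event and then squeeze it to zero. The key observation is that the event $A_M$ decomposes as the disjoint union $A_M = (A_M \cap B_M) \cup (A_M \cap B_M^c)$, where $B_M^c$ denotes the complement of $B_M$. By finite additivity of the probability measure, this gives
\[
\Prob[A_M] = \Prob[A_M \cap B_M] + \Prob[A_M \cap B_M^c],
\]
so that the quantity of interest is exactly
\[
\Prob[A_M] - \Prob[A_M \cap B_M] = \Prob[A_M \cap B_M^c].
\]

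The next step is to control this remainder by monotonicity of probability. Since $A_M \cap B_M^c \subseteq B_M^c$, we have $0 \le \Prob[A_M \cap B_M^c] \le \Prob[B_M^c] = 1 - \Prob[B_M]$. The hypothesis $\Prob[B_M] \to 1$ then forces $1 - \Prob[B_M] \to 0$, and the squeeze (sandwich) argument yields $\Prob[A_M \cap B_M^c] \to 0$. Combining this with the identity from the previous step gives $\Prob[A_M] - \Prob[A_M \cap B_M] \to 0$, as desired.

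This argument uses nothing beyond the axioms of a probability measure (additivity over a disjoint partition and monotonicity), which is why the statement is asserted to hold for \emph{any} probability measure; in particular it makes no use of the specific multilayer structure. There is no real obstacle here: the only point requiring a word of care is the decomposition of $A_M$ into disjoint pieces, which is what licenses the exact equality $\Prob[A_M] - \Prob[A_M \cap B_M] = \Prob[A_M \cap B_M^c]$ rather than a mere inequality. Everything else is an elementary limit bound.
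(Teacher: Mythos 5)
Your proof is correct and follows essentially the same route as the paper: the paper's one-line proof rests on the inequality $0\leq \Prob[A_M]-\Prob[A_M\cap B_M]\leq \Prob[\overline{B_M}] = 1-\Prob[B_M]$, and your disjoint-decomposition argument is just the explicit justification of that inequality (in fact an equality with $\Prob[A_M\cap B_M^c]$), followed by the same squeeze. Nothing is missing; you have simply spelled out the step the paper leaves implicit.
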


\begin{proof}
The follows from the following inequalities, 
\[
0\leq \Prob[A_M]- \Prob[A_M\cap B_M]\leq \Prob[\overline{B_M}] = 1 - \Prob[B_M].
\]
\end{proof}

Since the limiting marginal distribution \prettyref{eq:Poisson} degenerates to a point mass when $\lambda_\ell = 0$ or $\infty$, by the above lemma, we only need to prove \prettyref{thm:asymptotic-independence} for the case $\lambda_\ell\in (0,\infty)$ for all $\ell\in E$, or, 
\begin{equation}\label{eq:scaling}
\alpha_\ell + \beta_i + \beta_j = 1, \quad \forall \ell = (i,j) \in E,
\end{equation}
which we assume throughout the rest of this section. Note that this assumption and \prettyref{eq:regular} imply that $\beta_i < 1$ for all $i\in V$ with degree greater than one in $G$. Let $\Gamma = \{i\in V:\beta_i < 1\}$. If $\Gamma =\emptyset$, then every node has degree at most one, and hence \prettyref{eq:asymptotic-independence} holds trivially. We henceforth assume $\Gamma \neq\emptyset$ and let
\begin{equation}\label{eq:gamma}
\gamma = \min_{j:\in \Gamma } (1-\beta_j) \in (0,1].
\end{equation}
We can also assume that there is no isolated node in $G$ since all such nodes can be removed. 

Consider the $m$-th layer. Since the layers are i.i.d., the following analysis applies to all layers. Let $A_\ell$ be the event that link $\ell = (i,j)$ exists on this layer, i.e. $A_\ell = \{W_{m,\ell}=1\}$. The probability of this event is
\begin{equation}\label{eq:r}
r_\ell \triangleq \Prob[A_\ell] =  p_\ell q_i q_j \sim \lambda_\ell M^{-1}.
\end{equation}
Consider the probability that two links $\ell$ and $\ell'$ co-exist on the $m$-th layer. There are two cases. If $\ell$ and $\ell'$ do not share an endpoint, then $A_\ell$ and $A_{\ell'}$ are independent, so the probability is 
\[
\Prob[A_\ell \cap A_{\ell'}] = \Prob[A_\ell] \Prob[A_{\ell'}] = r_\ell r_{\ell'} \sim \lambda_{\ell}\lambda_{\ell'} M^{-2}.
\]
If $\ell$ and $\ell'$ share an endpoint $i$, then $i\in \Gamma$ and $A_\ell$ and $A_{\ell}$ are conditionally independent given $i$. Thus
\begin{eqnarray*}
\Prob[A_\ell \cap A_{\ell'}]&=&\Prob[A_\ell \cap A_{\ell'} \,|\, \hbox{ node $i$ active}]q_i \\
&=& \Prob[A_\ell \,|\, \hbox{node  $i$ active}] \Prob[ A_{\ell'}\,|\, \hbox{node $i$ active}] q _i\\
&=& \frac{r_\ell r_{\ell'}}{q_i} \sim \frac{\lambda_{\ell}\lambda_{\ell'}}{d_i} M^{-2+\beta_i} \leq \frac{\lambda_{\ell}\lambda_{\ell'}}{d_i} M^{-2+\gamma},
\end{eqnarray*}
where the last inequality follows from \eqref{eq:gamma}.
In both cases, we have
\begin{equation}\label{eq:prob-coexist}
\Prob[A_\ell \cap A_{\ell'}] \leq C M^{-1-\gamma},
\end{equation}
for some constant $C>0$ that does not depend on $\ell, \ell', m$ or $M$. 

Let $\mu_\ell$ be the probability that $\ell$ is the only  link on the $m$-th layer. We have
\begin{equation}\label{eq:mu}
\mu_\ell = \Prob\left[A_\ell \cap \bigcap_{\ell'\in E:\ell'\neq \ell} A_{\ell'}^c\right].
\end{equation}
From the identity
\[
A_\ell = \left(A_\ell \cap \bigcap_{\ell'\in E:\ell'\neq \ell} A_{\ell'}^c\right) \cup \left(\bigcup_{\ell'\in E:\ell'\neq \ell} A_\ell \cap A_{\ell'}\right),
\]
the union bound and \eqref{eq:prob-coexist}, we find
\begin{equation}\label{eq:delta}
0 \leq \delta_\ell\triangleq r_\ell - \mu_\ell \leq \sum_{\ell'\in E:\ell' \neq \ell} \Prob[A_\ell \cap A_{\ell'}] \leq |E| CM^{-1-\gamma}.
\end{equation}
Let $\mu_0$ be the probability that there is no link on the $m$-th layer, given by
\begin{equation}\label{eq:mu0}
\mu_0 = \Prob[A_\ell^c\,, \forall \ell\in E].
\end{equation}
Bonferroni's inequality 
\[
\Prob\left(\bigcup_{\ell\in E} A_{\ell}  \right)\geq \sum_{\ell\in E} \Prob(A_{\ell})-\frac{1}{2}\sum_{\ell,\ell'\in E, \ell\neq \ell'} \Prob(A_{\ell}\cap A_{\ell'})
\]
and \eqref{eq:prob-coexist} yield
\begin{equation}\label{eq:delta0}
0 \geq \delta_0 \triangleq 1-\mu_0 - \sum_{\ell\in E}  r_\ell \geq -\frac{1}{2}\sum_{\ell\neq \ell'} \Prob[A_\ell \cap A_{\ell'}] \geq -\frac{|E|^2}{2} C M^{-1-\gamma}.
\end{equation}

Consider the event that  $W_\ell=w_\ell$ for all $\ell\in E$ and each layer has at most one link. This event occurs if and only if each $\ell$ exists on a disjoint set of $w_\ell$ layers and the remaining $M-w$ layers are empty, where $w=\sum_{\ell\in E} w_\ell$. The corresponding probability is given by a multinomial distribution. Formally, if we let $B$ denote the event that each layer has at most one link, then
\begin{equation}\label{eq:multinomial}
\Prob\left[W_\ell = w_\ell, \forall \ell\in E, B\right] = \frac{M!}{(M-w)!\prod_{\ell\in E} w_\ell!} \mu_0^{M-w}\prod_{\ell\in E} \mu_\ell^{w_\ell},
\end{equation}
where $\mu_\ell$ and $\mu_0$ are defined in \eqref{eq:mu} and \eqref{eq:mu0}.

As $M\to\infty$, 
\begin{equation}\label{eq:factorial}
\frac{M!}{(M-w)!} \sim M^w.
\end{equation}
By \eqref{eq:r}, \eqref{eq:delta} and \eqref{eq:delta0}, we also have
\begin{equation}\label{eq:mu-lim}
\mu_\ell^{w_\ell} = (r_\ell - \delta_\ell)^{w_\ell} \sim  \frac{\lambda_\ell^{w_\ell}}{M^{w_\ell}},
\end{equation}
and
\begin{equation}\label{eq:mu0-lim}
\mu_0^{M-w} = \left(1-\sum_{\ell\in E} r_\ell - \delta_0\right)^{M-w} \to e^{-\sum_{\ell\in E} \lambda_\ell}.
\end{equation}
Plugging \eqref{eq:factorial}--\eqref{eq:mu0-lim} into \eqref{eq:multinomial} yields as $M\to\infty$,
\[
\Prob\left[W_\ell = w_\ell, \forall \ell\in E, B\right] \to \prod_{\ell\in E} \frac{\lambda_\ell^{w_\ell}}{w_\ell!} e^{-\lambda_\ell},
\]
which is almost the same as the claim in \eqref{eq:asymptotic-independence}.
By \prettyref{lemma:whp}, it suffices that $\Prob[B]\to 1$ as $M\to\infty$, which we show next. 

Note that the probability that a given layer has at most one link is given by $\mu_0 + \sum_{\ell\in E} \mu_\ell = 1 - \delta_0 - \sum_{\ell\in E} \delta_\ell$. Since the layers are independent, 
\[
\Prob[B]= \left(1 - \delta_0 - \sum_{\ell\in E} \delta_\ell\right)^M \sim e^{-M\delta_0 - \sum_{\ell\in E} M\delta_\ell} \to 1,
\]
where we have used \eqref{eq:delta} and \eqref{eq:delta0}. This completes the proof of \prettyref{thm:asymptotic-independence}.

\subsection{Extension to Non-identical Layers}\label{subsec:asymptotic-non-identical}

So far we have assumed that different layers are i.i.d., i.e. $p_\ell = \E[Y_{m,\ell}]$ and  $q_i = \E[Z_{m,i}]$ for all layers $m\in \sM$, $\ell\in E$ and $i\in V$. In this section, we will relax this assumption by letting $p_{m,\ell} = \E[Y_{m,\ell}]$ and  $q_{m,i} = \E[Z_{m,i}]$. The development will parallel that for the i.i.d. case.

Recall the definition $W_{m,\ell} = Y_{m,\ell} Z_{m,i} Z_{m,j}$ for $\ell=(i,j)$. Link $\ell$ exists on the $m$-th layer if and only if $W_{m,\ell}=1$. 
Define
\begin{equation}\label{eq:r}
r_{m,\ell} \triangleq \E W_{m,\ell} =  p_{m,\ell} q_{m,i} q_{m,j}.
\end{equation}

We assume that as $M\to\infty$,
\begin{equation}\label{eq:r-null}
\max_{1\leq m\leq M} r_{m,\ell} \to 0,
\end{equation}
and
\begin{equation}\label{eq:r-limit}
\E W_\ell = \sum_{m=1}^M r_{m,\ell} \to \lambda_\ell\in (0, \infty).
\end{equation}
Then again a classical result (e.g. \cite[Theorem 5.7]{kallenberg2006foundations}) shows that the marginal distribution of $W_\ell$ converges to a Poisson distribution with parameter $\lambda_\ell$ as in \eqref{eq:Poisson}.

Let $R_{\ell, \ell'}$ be the expected number of layers on which two links $\ell$ and $\ell'$ co-exist, i.e.
\begin{equation}\label{eq:R}
R_{\ell,\ell'}\triangleq  \sum_{m=1}^M \E [W_{m,\ell} W_{m,\ell'}].
\end{equation}
We replace condition \eqref{eq:regular} by the following. If two links $\ell, \ell'\in E$ share an endpoint, then as $M\to\infty$,  
\begin{equation}\label{eq:correlation}
R_{\ell, \ell'} \to 0.
\end{equation}
Note that in the i.i.d.~case, \eqref{eq:regular} implies \eqref{eq:correlation} as shown in the \prettyref{subsec:asymptotic-proof}.

\begin{theorem}\label{thm:asymptotic-independence-nonidentical}
Under the conditions \eqref{eq:r-null}, \eqref{eq:r-limit} and \eqref{eq:correlation}, the collection of random variables $\{W_\ell:\ell\in E\}$ become asymptotically independent as $M\to\infty$, i.e. \eqref{eq:asymptotic-independence} holds.
\end{theorem}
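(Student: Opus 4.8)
The plan is to prove convergence of the joint probability generating function (pgf) of $(W_\ell)_{\ell\in E}$ to that of a vector of independent Poisson variables, reusing the per-layer estimates from the i.i.d.\ proof in \prettyref{subsec:asymptotic-proof} but now keeping the layer index $m$ throughout. Since the layers are independent, the joint pgf factorizes as
\[
\E\Bigl[\prod_{\ell\in E} s_\ell^{W_\ell}\Bigr] = \prod_{m=1}^M \E\Bigl[\prod_{\ell\in E} s_\ell^{W_{m,\ell}}\Bigr], \qquad s_\ell\in[0,1],
\]
so it suffices to understand a single factor and then pass to the product. First I would set up the per-layer analogues of the quantities in \prettyref{subsec:asymptotic-proof}: for each layer $m$ let $A_{m,\ell}=\{W_{m,\ell}=1\}$, let $\mu_{m,\ell}$ be the probability that $\ell$ is the \emph{unique} link present on layer $m$, let $\mu_{m,0}$ be the probability that layer $m$ is empty, and set $\delta_{m,\ell}=r_{m,\ell}-\mu_{m,\ell}\ge0$, with $r_{m,\ell}$ as in \eqref{eq:r}.

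The key preliminary estimates, all obtained by summing the corresponding per-layer bounds over $m$, are: (i) $R_{\ell,\ell'}\to0$ for every pair $\ell\ne\ell'$ --- for links sharing an endpoint this is exactly hypothesis \eqref{eq:correlation}, while for disjoint links $\E[W_{m,\ell}W_{m,\ell'}]=r_{m,\ell}r_{m,\ell'}$ and $\sum_m r_{m,\ell}r_{m,\ell'}\le(\max_m r_{m,\ell'})\sum_m r_{m,\ell}\to0$ by \eqref{eq:r-null} and \eqref{eq:r-limit}; (ii) writing $p_m^{(2)}$ for the probability that layer $m$ carries at least two links, the union bound over unordered pairs gives $p_m^{(2)}\le\tfrac12\sum_{\ell\ne\ell'}\E[W_{m,\ell}W_{m,\ell'}]$, hence $\sum_m p_m^{(2)}\le\tfrac12\sum_{\ell\ne\ell'}R_{\ell,\ell'}\to0$; (iii) the identity $A_{m,\ell}=(A_{m,\ell}\cap\bigcap_{\ell'\ne\ell}A_{m,\ell'}^c)\cup\bigcup_{\ell'\ne\ell}(A_{m,\ell}\cap A_{m,\ell'})$ and a union bound give $0\le\delta_{m,\ell}\le\sum_{\ell'\ne\ell}\E[W_{m,\ell}W_{m,\ell'}]$, so $0\le\sum_m\delta_{m,\ell}\le\sum_{\ell'\ne\ell}R_{\ell,\ell'}\to0$ and therefore $\sum_m\mu_{m,\ell}\to\lambda_\ell$ by \eqref{eq:r-limit}. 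The individual negligibility $\max_m\mu_{m,\ell}\le\max_m r_{m,\ell}\to0$ and $\max_m p_m^{(2)}\to0$ also follow at once.

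With these in hand I would decompose each factor according to whether layer $m$ carries zero, one, or at least two links:
\[
\E\Bigl[\prod_\ell s_\ell^{W_{m,\ell}}\Bigr] = \mu_{m,0} + \sum_\ell \mu_{m,\ell}\,s_\ell + E_m = 1 - a_m(s), \qquad a_m(s):=\sum_\ell\mu_{m,\ell}(1-s_\ell)+\eta_m,
\]
where $E_m\ge0$ collects the $\ge2$-link configurations and $\eta_m:=p_m^{(2)}-E_m\in[0,p_m^{(2)}]$ because $\prod_\ell s_\ell^{W_{m,\ell}}\in[0,1]$ for $s\in[0,1]^E$. Since each factor lies in $[0,1]$ we have $a_m(s)\in[0,1]$, and from (ii)--(iii) we obtain $\max_m a_m(s)\to0$ together with $\sum_m a_m(s)\to\sum_\ell\lambda_\ell(1-s_\ell)$. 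The elementary limit $\prod_m(1-a_m)\to e^{-\sum_m a_m}$, valid once $\max_m a_m\to0$ and $\sum_m a_m$ converges (since then $\sum_m a_m^2\le(\max_m a_m)\sum_m a_m\to0$), then yields
\[
\lim_{M\to\infty}\E\Bigl[\prod_\ell s_\ell^{W_\ell}\Bigr]=\exp\Bigl(-\sum_\ell\lambda_\ell(1-s_\ell)\Bigr)=\prod_{\ell\in E} e^{-\lambda_\ell(1-s_\ell)},
\]
which is the joint pgf of independent Poisson variables with means $\lambda_\ell$. Pointwise convergence of pgfs on $[0,1]^E$ to a pgf implies convergence of the joint distribution of the $\N^E$-valued vector, giving \eqref{eq:asymptotic-independence}.

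I expect the main obstacle to be the control of the multi-link term $E_m$ (equivalently $\eta_m$): the whole argument hinges on showing that the probability of two or more links co-existing on a single layer is negligible after summing over all $M$ layers, and this is precisely what the pairwise condition \eqref{eq:correlation} buys for adjacent links, with the null-array condition \eqref{eq:r-null} handling disjoint links. This route is cleaner than trying to imitate the explicit multinomial computation \eqref{eq:multinomial} of the i.i.d.\ case, which in the non-identical setting degenerates into an unwieldy sum over assignments of layers to links; the pgf factorization sidesteps that combinatorics entirely while relying on the same underlying estimates $\sum_m\delta_{m,\ell}\to0$ and $\sum_m p_m^{(2)}\to0$.
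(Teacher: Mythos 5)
Your proposal is correct and follows essentially the same route as the paper's own proof: the same preliminary estimates ($R_{\ell,\ell'}\to 0$ for every pair of links, $\sum_m \mu_{m,\ell}\to\lambda_\ell$, and negligibility of layers carrying two or more links), followed by factorizing a joint transform over the independent layers and passing to the product of Poisson transforms. The only differences are cosmetic: you work with the joint pgf where the paper uses the Laplace transform $\E\exp\left(-\sum_{\ell\in E} t_\ell W_\ell\right)$, and you replace the paper's appeal to Lemma 5.8 of \cite{kallenberg2006foundations} with the elementary limit $\prod_m (1-a_m)\to e^{-\lim_M \sum_m a_m}$ under $\max_m a_m\to 0$, which makes the argument self-contained.
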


\begin{proof}
We first show that $R_{\ell, \ell'} \to 0$ as $M\to\infty$ for any pair of links $\ell\neq \ell'$. If $\ell$ and $\ell'$ share an endpoint, this is assumption \eqref{eq:correlation}.
If $\ell$ and $\ell'$ do not share an endpoint, then $W_{m,\ell}$ and $W_{m,\ell'}$ are independent, so 
\[
R_{\ell, \ell'} = \sum_{m=1}^M \E W_{m,\ell} \E W_{m,\ell'} = \sum_{m=1}^M r_{m,\ell} r_{m,\ell'}.
\]
By \eqref{eq:r-null} and \eqref{eq:r-limit},
\[
R_{\ell,\ell'} = \sum_{m=1}^M r_{m,\ell} r_{m,\ell'} \leq \max_{1\leq m'\leq M} r_{m',\ell'} \sum_{m=1}^M r_{m,\ell} \to 0,
\]
as $M\to\infty$. Thus $R_{\ell, \ell'} \to 0$ as $M\to\infty$ for all $\ell\neq \ell'$.

Let $\mu_{m,\ell}$ be the probability that $\ell$ is the only link on the $m$-th layer,  By the same argument for \eqref{eq:delta} and \eqref{eq:delta0}, we obtain
\[
0 \leq  r_{m,\ell} - \mu_{m,\ell} \leq \sum_{\ell'\in E:\ell' \neq \ell} \E [W_{m,\ell} W_{m,\ell'}].
\]
Summing over $m$, 
\[
0\leq \sum_{m=1}^M  r_{m,\ell} - \sum_{m=1}^M \mu_{m,\ell} \leq \sum_{\ell'\in E:\ell' \neq \ell} R_{\ell, \ell'}.
\]
The last sum goes to zero as $M\to\infty$. Thus \eqref{eq:r-null} and \eqref{eq:r-limit} imply $\mu_{m,\ell}\to 0$ and
\begin{equation}\label{eq:mu-m}
\sum_{m=1}^M \mu_{m,\ell} \to \lambda_\ell
\end{equation}
as $M\to\infty$.

Let $\mu_{m,0}$ the probability that there is no link on the $m$-th layer. 
Using first the same argument for \eqref{eq:delta0} and then the argument above for $\mu_{m,\ell}$, we obtain $\mu_{m,0} \to 0$ and
\begin{equation}\label{eq:mu0-m}
\sum_{m=1}^M (1-\mu_{m,0}) \to \sum_{\ell\in E} \sum_{m=1}^M r_{m,\ell} = \sum_{\ell\in E} \lambda_\ell,
\end{equation}
as $M\to\infty$.

Now we compute the Laplace transform of the random variables $W_\ell, \ell\in E$. For $t=(t_\ell:\ell\in E)$ and $t_\ell\geq 0$, the Laplace transform $\phi(t)$ is given by
\[
\phi(t) = \E \exp\left(-\sum_{\ell\in E} t_\ell W_\ell \right).
\]
Using \eqref{eq:W} and the independence between layers, we obtain
\[
\phi(t) = \E \exp\left(-\sum_{\ell\in E} t_\ell \sum_{m\in \sM} W_{m,\ell} \right)=\prod_{m\in \sM} \E \exp\left(-\sum_{\ell\in E} t_\ell  W_{m,\ell} \right).
\]
For a given $m$, 
\[
\E \exp\left(-\sum_{\ell\in E} t_\ell  W_{m,\ell} \right) \geq \mu_{m,0} + \sum_{\ell\in E} e^{-t_\ell} \mu_{m,\ell},
\]
and
\[
\E \exp\left(-\sum_{\ell\in E} t_\ell  W_{m,\ell} \right) \leq 1-\sum_{\ell\in E} \mu_{m,\ell} + \sum_{\ell\in E} e^{-t_\ell} \mu_{m,\ell}. 
\]
Using \eqref{eq:mu-m} and \eqref{eq:mu0-m}, we obtain
\[
\sum_{m=1}^M \left[1- \E \exp\left(-\sum_{\ell\in E} t_\ell  W_{m,\ell} \right)\right] \to \sum_{\ell\in E} \lambda_{\ell} (1-e^{-t_\ell}), \quad \text{as } M\to\infty.
\]
By Lemma 5.8 of \cite{kallenberg2006foundations}, as $M\to\infty$,
\[
\phi(t)\to \exp\left(\sum_{\ell\in E} \lambda_{\ell} (e^{-t_\ell}-1)\right)=\prod_{\ell\in E} \exp\left(\lambda_{\ell} (e^{-t_\ell}-1)\right).
\]
The limit is the product of the Laplace transforms Poisson distributions with parameters $\lambda_\ell$, which yields \eqref{eq:asymptotic-independence}.
\end{proof}


\section{Related Work}\label{sec:related}

As discussed in the introduction of this paper, there has been an explosion of research in multilayer networks in recent years, mostly from the physics community;  two recent review articles are~\cite{Kiv14, Boc14}.  Multilayer networks have been applied to airline networks~\cite{Du16}, transportation systems serving a common user population~\cite{Gal15}, many body problems arising in condensed matter physics~\cite{Pel95}, brain and neural networks~\cite{Bul09} and scientific collaboration networks~\cite{Bat16}, in addition to those listed in the introduction. Various models of multilayer networks relevant to different application scenarios have been proposed. These have been used in studies of diffusion dynamics of multilayer networks~\cite{Gom13}, cascades~\cite{Bru12, Gao13}, spectral properties~\cite{Rib13}, robustness ~\cite{Cel13,Byu14, Rad17}, failure mechanisms~\cite{Dom14, Rei14}, correlations~\cite{Nic15}, growing random multilayer networks~\cite{Nic13}, epidemic spread~\cite{Mar11}, community structure~\cite{Rau17}, and algorithmic complexity of finding short paths through co-evolving multilayer networks~\cite{Bas15}. The connectivity properties of random multilayer networks have also been studied, such as the study of the properties of the giant connected component (GCC) in a random network with correlated multiplexity, i.e., where the node degree distributions across layers have positive (or negative) correlations~\cite{Lee12}.  

Stochastic multilayer networks are those whose constructions can be described by one or more control parameters (such as probability of the presence of a node, edge or more complex attributes). For such networks, a wide variety of percolation formulations have been proposed and studied, e.g., competition between layers~\cite{Zha13}, weak percolation~\cite{Bax14}, $k$-core percolation~\cite{Azi14},
directed percolation~\cite{Azi14a}, spanning connectivity of a multilayer site-percolated network~\cite{Guh16}, and bond percolation~\cite{Hac16}.
Our stochastic multilayer network model can be visualized as a layered extension of the classical site-bond percolation model \cite{Ham80, Yan90}, where bonds and sites are independently occupied with probabilities $q$ and $p$. 

The work closest to ours~\cite{Guh16} considers a special case of our model where only node deactivations are permitted and these are characterized as i.i.d. Bernoulli random variables.  This work focuses on deriving conditions under which the multilayer network percolates, i.e., identifying the deactivation probability threshold such that if the deactivation probability lies below this threshold,  a single giant connected component emerges. \cite{Guh16} studied the percolation behavior as $M\rightarrow \infty$ and derived the threshold under the conjecture that links become asymptotically independent as $M\rightarrow \infty$ without proving that conjecture.  We have established this conjecture to be true with Theorem \ref{thm:Poisson} in Section \ref{sec:asymp}.  Moreover, much of our paper is concerned with characterizing and computing the multilayer network configuration distribution.


\section{Conclusions}\label{sec:concl}

In this work, we introduced a new class of stochastic multilayer networks. 
Such a network is the aggregation of $M$ random sub-networks of an underlying connectivity graph $G$. This model finds applications in social networks and communication networks, and, more generally, in any scenario where a multilayer network is formed over a common set of nodes via coexisting means of connectivity.  We showed that it is \#P hard to compute exactly and NP-hard to approximate link configuration probabilities for general $G$, and  it remains NP-hard to compute these probabilities when $G$ is a clique. 
We derived efficient recursions for computing  configuration probabilities when $G$ is a line or more generally a tree.  We showed that for appropriate scalings of the node and link selection processes
 to a layer,  link multiplicities have asymptotically independent Poisson distributions as $M$ goes to infinity.

\begin{acks}
This research was supported in part by the U.S. National Science Foundation under Grant Number CNS-1413998, and by the U.S. Army Research Laboratory and the U.K. Ministry of Defence under Agreement Number W911NF-16-3-0001. The views and conclusions contained in this document are those of the authors and should not be interpreted as representing the official policies, either expressed or implied, of the U.S. Army Research Laboratory, the U.S. Government, the U.K. Ministry of Defence or the U.K. Government. The U.S. and U.K. Governments are authorized to reproduce and distribute reprints for Government purposes notwithstanding any copyright notation hereon. This document does not contain technology or technical data controlled under either the U.S. International Traffic in Arms Regulations or the U.S. Export Administration Regulations.
\end{acks}

\bibliographystyle{ACM-Reference-Format}
\bibliography{TEX/bib}

\appendix
\section{Appendix}
\label{sec:app}
Define
\begin{align}
F(f,m)&=\sum_{i=0}^{M-m} {M-m \choose i} q^i \bar q^{M-m-i} f(i)\label{def-f}\\
G(f,m)&=\sum_{i=0}^{M-m} {M-m \choose i} q^i \bar q^{M-m-i}  \sum_{j=1}^m {m\choose j} q^j \bar q^{m-j}  f(i+j)\label{def-g}.
\end{align}
\begin{lemma}
\label{lem1}
If $f(i)=a^i b^{M-i} c$ then
\begin{align*}
F(f,m)&= c b^m (qa+\bar q b)^{M-m}\\
G(f,m)&= c(qa+\bar q b)^M - c(qa+\bar q b)^{M-m} (\bar q b)^m.
\end{align*}
\end{lemma}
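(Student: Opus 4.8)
The plan is to substitute the explicit form $f(i) = c\,a^i b^{M-i}$ into the definitions \eqref{def-f} and \eqref{def-g} and to recognize each resulting sum as a binomial expansion. Both identities are then pure applications of the binomial theorem, so I do not expect a genuine obstacle; the only point requiring care is bookkeeping.

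First I would handle $F$. Substituting $f(i) = c\,a^i b^{M-i}$ into \eqref{def-f} and writing $b^{M-i} = b^m\, b^{M-m-i}$ to pull out the factor $b^m$, which does not depend on the summation index, I obtain $F(f,m) = c\,b^m \sum_{i=0}^{M-m}\binom{M-m}{i}(qa)^i (\bar q b)^{M-m-i}$. The binomial theorem collapses the sum to $(qa+\bar q b)^{M-m}$, which gives the first claim.

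For $G$, the key observation is that after substituting $f(i+j) = c\,a^{i+j} b^{M-i-j}$ into \eqref{def-g}, the generic term of the double sum factors as $\binom{M-m}{i}\binom{m}{j}(qa)^{i+j}(\bar q b)^{M-i-j}$, since the total exponent of $\bar q$ is $(M-m-i)+(m-j) = M-i-j$. Hence, if the inner index $j$ ran over $0,\dots,m$, the double sum would separate into a product of two independent binomial expansions, namely $(qa+\bar q b)^{M-m}\cdot(qa+\bar q b)^{m} = (qa+\bar q b)^{M}$.

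The only wrinkle is that in \eqref{def-g} the inner sum starts at $j=1$, so I would write the $j=1,\dots,m$ sum as the full $j=0,\dots,m$ sum minus its $j=0$ term. The $j=0$ term contributes the factor $\bar q^m$ (coming from $\binom{m}{0}q^0\bar q^{m}$) times exactly the sum computed for $F$ with $c=1$, namely $b^m(qa+\bar q b)^{M-m}$; restoring the constant $c$ yields $c(\bar q b)^m(qa+\bar q b)^{M-m}$. Subtracting this from $c(qa+\bar q b)^M$ gives the stated formula for $G(f,m)$. As noted, there is no real obstacle: the argument is entirely binomial, and the single delicate step is correctly accounting for the omitted $j=0$ term.
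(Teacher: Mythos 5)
Your proof is correct and follows exactly the route the paper intends: the paper's own proof is the one-line remark that the lemma ``follows directly from the binomial expansion,'' and your computation—pulling out $b^m$ for $F$, and recovering the full product $(qa+\bar q b)^{M}$ for $G$ by adding back and subtracting the omitted $j=0$ term, which equals $c(\bar q b)^m(qa+\bar q b)^{M-m}$—is precisely the bookkeeping that remark leaves implicit.
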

\begin{proof}
This follows directly from the binomial expansion. 
\end{proof}

Note that $Q_{x^{(k)}}(m)$ in \eqref{eq:Qxnm} expresses as 
\begin{equation}
\label{Qxkm}
Q_{x^{(n)}}(m)=\bar x_{n} \bar q^m F\left(Q_{x^{(n-1)}}, m\right) + x_{n} G\left(Q_{x^{(n-1)}},m\right).
\end{equation}
\begin{lemma}
\label{lem2}
\begin{align}
&\bar q^m F\left(Q_{0^{(n-1)}},m\right)= Q_{0^{(n)}}(m) \label{FQ'k}\\
&G\left(Q_{0^{(n-1)}},m\right)=(1-(n-1) q^2 +P_{n})^M - Q_{0^{(n)}}(m). \label{GQ'k}
\end{align}
\end{lemma}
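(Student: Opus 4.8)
The plan is to recognize that, regarded as a function of its single argument, $Q_{0^{(n-1)}}(\cdot)$ has exactly the exponential shape $f(i)=a^i b^{M-i}c$ to which Lemma~\ref{lem1} applies, and then simply read off $F$ and $G$ from that lemma. From the closed form in Proposition~\ref{line-all-empty-m_layers}, for $n\geq 3$ one has $Q_{0^{(n-1)}}(i)=\bar q^i\,[1-(n-3)q^2+P_{n-2}(q)]^i\,[1-(n-2)q^2+P_{n-1}(q)]^{M-i}$, so I would set $a=\bar q\,[1-(n-3)q^2+P_{n-2}(q)]$, $b=1-(n-2)q^2+P_{n-1}(q)$ and $c=1$; the base value $n=2$ is covered identically by $Q_{0^{(1)}}(i)=\bar q^i$ (so $a=\bar q$, $b=1$, $c=1$) together with the convention $P_1=P_2\equiv 0$ from \eqref{P1-P2}. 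Applying Lemma~\ref{lem1} then yields at once $F(Q_{0^{(n-1)}},m)=b^m(qa+\bar q b)^{M-m}$ and $G(Q_{0^{(n-1)}},m)=(qa+\bar q b)^M-(qa+\bar q b)^{M-m}(\bar q b)^m$.

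The crux is the single algebraic identity $qa+\bar q b=1-(n-1)q^2+P_n(q)$, and here the recursion \eqref{def-Pk} for the polynomials $P_k$ is exactly what is needed. Substituting the expressions for $a$ and $b$ and using $q\bar q+\bar q=1-q^2$, the part of $qa+\bar q b$ that does not involve the $P$'s collects into $1-(n-1)q^2+q^3+(n-3)q^4$ (the $q^3$ and $q^4$ terms arising from expanding $q^2\bar q[(n-3)q+(n-2)]=(n-2)q^2-q^3-(n-3)q^4$), while the part involving the $P$'s collects into $\bar q P_{n-1}(q)+q\bar q P_{n-2}(q)$. Adding the two pieces and invoking the definition $P_n(q)=\bar q P_{n-1}(q)+q\bar q P_{n-2}(q)+q^3+(n-3)q^4$ collapses everything to $1-(n-1)q^2+P_n(q)$, as claimed.

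Once this identity is in hand, both conclusions follow by direct substitution, with no further work. For \eqref{FQ'k}, I compute $\bar q^m F(Q_{0^{(n-1)}},m)=\bar q^m b^m(qa+\bar q b)^{M-m}=\bar q^m[1-(n-2)q^2+P_{n-1}(q)]^m[1-(n-1)q^2+P_n(q)]^{M-m}$, which is precisely $Q_{0^{(n)}}(m)$ as given in \eqref{eq:Q0nm}. For \eqref{GQ'k}, the first term of $G$ equals $(qa+\bar q b)^M=(1-(n-1)q^2+P_n(q))^M$, while the second term equals $(qa+\bar q b)^{M-m}(\bar q b)^m=\bar q^m[1-(n-2)q^2+P_{n-1}(q)]^m[1-(n-1)q^2+P_n(q)]^{M-m}$, which is again $Q_{0^{(n)}}(m)$; hence $G(Q_{0^{(n-1)}},m)=(1-(n-1)q^2+P_n(q))^M-Q_{0^{(n)}}(m)$.

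I expect the only real obstacle to be the polynomial bookkeeping in the middle step: one must track the $q^3$ and $q^4$ contributions carefully and confirm that they match precisely the inhomogeneous part $q^3+(n-3)q^4$ of the $P_n$ recursion, and one should double-check that the degenerate case $n=2$ is genuinely consistent with the $P_1=P_2\equiv 0$ convention rather than needing a separate argument. Everything else is a mechanical consequence of Lemma~\ref{lem1} applied to the exponential form of $Q_{0^{(n-1)}}(\cdot)$ and of the closed form for $Q_{0^{(n)}}(m)$.
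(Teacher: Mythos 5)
Your proof is correct, and for the substantive half of the lemma it is exactly the paper's argument with the omitted algebra filled in: the paper proves \eqref{GQ'k} by citing precisely the ingredients you use --- the closed form \eqref{eq:Q0nm}, the recursion \eqref{def-Pk}, and Lemma~\ref{lem1} --- and your identity $qa+\bar q b = 1-(n-1)q^2+P_n(q)$ is the ``easily obtained'' step it leaves to the reader. The one place you genuinely diverge is \eqref{FQ'k}: the paper gets it for free by setting $x_1=\cdots=x_n=0$ in \eqref{Qxkm} (equivalently \eqref{eq:Qxnm}), since with $x_n=0$ the $\bar q^m F$ term is the entire right-hand side, so no closed form and no Lemma~\ref{lem1} are needed. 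Your derivation of \eqref{FQ'k} from the closed form via Lemma~\ref{lem1} is also valid and is not circular (Proposition~\ref{line-all-empty-m_layers} is proved independently of this lemma), and it has the mild advantage that both parts of the lemma fall out of a single computation; the cost is that it silently re-does the inductive step of Proposition~\ref{line-all-empty-m_layers}, whereas the paper's observation exposes \eqref{FQ'k} as a definitional triviality. Your bookkeeping of the $q^3$ and $q^4$ terms and your check of the base case $n=2$ against the convention $P_1=P_2\equiv 0$ are both correct.
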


\begin{proof}
(\ref{FQ'k}) follows  from (\ref{Qxkm}) with  $x_1=\cdots=x_n=0$;  (\ref{GQ'k}) is easily obtained by using \eqref{eq:Q0nm}, (\ref{def-Pk}) and Lemma \ref{lem1}.
\end{proof}



\end{document}